\documentclass[runningheads,table]{llncs}
\usepackage{etoolbox}
\usepackage{paperstyle}
\usepackage{macros}
\def\orcidID#1{\href{http://orcid.org/#1}{\raisebox{-1.25pt}{\includegraphics{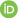}}}}

\newcommand{\PH}[1]{\textcolor{magenta}{PH: #1}}

\newbool{shortversion}
\boolfalse{shortversion}

\begin{document}

\title{Completeness of Synthesis under Realizability Assumptions using Superposition}
%
%
\author{
M\'arton Hajdu\orcidID{0000-0002-8273-2613} \and
Petra Hozzov\'a\orcidID{0000-0003-0845-5811}  \and 
Laura Kov\'acs\orcidID{0000-0002-8299-2714} \and
Eva Maria Wagner\textsuperscript{(\Envelope)}\orcidID{0009-0006-3765-4130}
}
\authorrunning{Hajdu, Hozzov\'a, Kov\'acs, 
Wagner}
\institute{TU Wien, Vienna, Austria\\
eva.maria.wagner@tuwien.ac.at 
} 

\titlerunning{Completeness of Synthesis under Realizability}
\maketitle              
%


\begin{abstract}
 Program synthesis is the task of automatically deriving a program that has been
specified by a user in advance. Combining automated theorem proving with program synthesis enables the automated construction of proven-to-be-correct programs, thereby ensuring software reliability.
In this paper, we consider the superposition-based calculus extended to support synthesis of recursion-free programs allowing reasoning with uncomputable symbols. 
We present cases where the calculus fails and refine it to solve them.
We prove that the refined calculus is sound.
Finally, we also prove completeness in the following sense: if at least one computable program satisfying the given specification exists, we show that the modified calculus finds one.

\keywords{Program Synthesis  \and Saturation \and Superposition \and Theorem Proving.}
\end{abstract}
%
%
%

\section{Introduction} \label{section:introduction}
Program synthesis focuses on constructing programs from a given functional specification. 
There are many different approaches for solving different flavors of this problem, including ones that guarantee that the found program satisfies a logical specification~\cite{SynSat,SMTSyn,Z3Syn}; techniques that synthesize a program based on a set of input-output examples~\cite{iosyn1,iosyn2,iosyn3}; and LLM-based methods which require external correctness checking~\cite{llm}.
In this paper, we focus on an automated deductive synthesis approach using a superposition-based theorem prover, in which a saturation-based framework proves a given specification while simultaneously generating code conforming to that specification.

Recent work in~\cite{SynSat} extracts a recursion-free program from a superposition-based proof of a logical specification (requirement) on the program. Our approach explores and revises this framework and solves 
\emph{program synthesis in the presence of uncomputable symbols}~\cite{SynSat,Z3Syn}. Doing so, we impose the following  syntactic restrictions on the program to be synthesized: 
(i) the program should use only so-called \emph{computable} symbols, while (ii) its functional specification may use both computable and {uncomputable} symbols.
Uncomputable symbols include, for example, symbols annotated as such by the user of the synthesis system, allowing for better control of what the output program should (not) use. Uncomputable symbols may, however, also include fresh symbols, such as Skolem functions, introduced during the proving process. 

\begin{motivatingexample*}
\label{ex:motivating}
We motivate our work using an example adapted from~\cite{Reger2018,VampireWS23}. Consider the following constraints from the FLoC 2026 workshop schedule: 
\begin{enumerate}
    \item On Friday ($\fri$), the Vampire ($\vamp$) workshop is taking place. Using the unary predicate $\workshop$ for workshops, we assert  $\workshop(\vamp)$).
    \item On Saturday ($\sat$), the PAAR ($\paar$) workshop is taking place (thus, $\workshop(\paar)$).
    \item Today ($x$) is either Friday ($\fri$) or Saturday ($\sat$).
\end{enumerate}
Our task for program synthesis is to infer what workshop $y$ takes place depending on $x$, with the condition that $\workshop$ is uncomputable, i.e. the answer should not contain $\workshop$.

We encode this instance of program synthesis as follows. 
For readability, we might omit parentheses in unary symbol applications throughout the paper, e.g. we might write $\workshop\vamp$ instead of $\workshop(\vamp)$. 
We are looking for a program that is a witness for $y$ in the following formula:
\begin{equation}
\varphi:\ \ \forall x \exists y.(((x\eqs \fri \lor x \eqs \sat) \land ( x \eqs \fri \! \rightarrow \workshop\vamp) \land (x \eqs \sat \!\rightarrow \workshop\paar))\!\rightarrow \workshop y).\label{eq:motivating-formula}
\end{equation}
A program that is a witness for $y$ in~\eqref{eq:motivating-formula} is called a \emph{solution of the synthesis problem} specified by~\eqref{eq:motivating-formula}; see \Cref{section:problem} for precise formulation of program synthesis and its solution. 
In the case of~\eqref{eq:motivating-formula},  two possible programs found by our approach  are $\IfThenElse{x\eqs\fri}{\vamp}{\paar}$ and $\IfThenElse{x\eqs\sat}{\paar}{\vamp}$. 
\QED
\end{motivatingexample*}

\paragraph{\bf Synthesis and Saturation.} 
In our approach to program synthesis, we build on the saturation-based framework for program synthesis with uncomputable symbols using the superposition calculus, as introduced in~\cite{SynSat}.
While the approach was shown to be correct in~\cite{SynSat}, the question of completeness remained open until now.
In this paper, we close this gap by investigating the \emph{completeness of the calculus under the assumption of realizability}: if a computable program satisfying the specification exists, is the calculus guaranteed to derive it?
We identify properties that make the calculus of~\cite{SynSat} inherently incomplete. We  adjust the calculus accordingly, resulting in our \syncal{} framework (\Cref{section:completeness}). We further prove completeness of \syncal{} under the assumption of realizability (\Cref{section:completeness}).



This paper starts with preliminaries in \Cref{section:preliminaries}, summarizing necessary notions of first-order logic and automated reasoning.
\Cref{section:problem} defines our  synthesis problem and presents a framework for solving it, by revising the setting of~\cite{SynSat}.
Following are \emph{our main contributions}:
\begin{itemize}
    \item We introduce a new superposition-based calculus for synthesis (\Cref{section:calculus}). Our calculus is coined as   Superposition with Realizability Assumptions, in short \syncal.  Motivated by examples that cannot be solved by~\cite{SynSat}, \syncal  
    includes tailored  conditions for term orderings and selection functions.
    \item We prove completeness of our \syncal calculus (\Cref{section:completeness}). By completeness we mean that, 
    if at least one computable program satisfying a given specification exists, our \syncal calculus finds one such program. 
\end{itemize}
We then review related work in \Cref{section:related_work}, and conclude our paper in \Cref{section:summary}. The (omitted) proofs of our results  from \Cref{section:calculus,section:completeness} are given in 
\ifbool{shortversion}{
 the extended version  \cite{synreal-preprint} of this paper.
 }{
\Cref{appendix:calculus,appendix:completeness}, respectively.}

\section{Preliminaries} \label{section:preliminaries}
\paragraph{First-order Logic.} We assume familiarity with standard multi-sorted first-order logic (FOL) with equality, where equality is denoted by $\eqs$. We consider a fixed signature $\Sigma$ consisting of a finite set of \emph{function symbols} with associated arities and a set of \emph{variables} $\mathcal{V}$; variables are not part of the signature. We define the set of \emph{terms} $\mathcal{T}(\Sigma \cup\mathcal{V})$, or simply $\mathcal{T}$ when it is clear from the context,  in the standard way over $\Sigma\cup\mathcal{V}$. 
We denote variables by $x,y,z$; terms by $s, t, l, r, k$; literals by $L, K$;  clauses by $C, D$; and  formulas by $F, G$,  all possibly with indices.
Further, we write $\alpha$ for Skolem constants.
We denote lists of literals by $\ELL$ and $\KAY$. We denote the empty list of literals by $\emptylist$. Let $L$ be a literal and $\ELL$ a list of literals, then $\lapp{L}{\ELL}$ denotes the list of literals having $L$ as a head and $\ELL$ as a tail.
We write $\bar{a}$ for the tuple of variables or terms $a_1,\dots,a_n$.
When a term $t$ is of the form $f(\bar{t})$, we say that $f$ is the \emph{top-level symbol} of $t$.
By $\cnf(F)$ we denote the \emph{clausal normal form} (CNF) of the formula $F$.
We reserve the symbol $\square$ for the \emph{empty clause}
which is logically equivalent to $\bot$.
We write $\hat{L}$ for the literal complementary to $L$.
We write $t\neqs s$ as a shorthand for $\lnot (t\eqs s)$. We use the symbol $\deqs$ to denote either $\eqs$ or $\neqs$. 
An \emph{expression} is a term, literal, clause, or formula. We write $E_1 = E_2$ to denote syntactic equality of two expressions. If an expression does not contain variables, we call it \emph{ground}. 
We write $E[t]$ to denote all (possibly zero) occurrences of the term $t$.
Then, $E[s]$ denotes the expression $E[t]$ where all occurrences of $t$ are replaced by the term $s$. 
Formulas with free variables are considered implicitly universally quantified; that is, we consider closed formulas.
%
A \emph{substitution} $\sigma$ is a mapping from variables to terms such that the set  $\{x\mid \sigma(x)\neq x\}$ of variables is finite. A substitution $\sigma$ is a \emph{unifier} of two expressions $E$ and $E'$ if $E\sigma= E'\sigma$, and is a \emph{most general unifier} (\emph{mgu}) if for every unifier $\theta$ of $E$ and $E'$, there exists substitution $\tau$ such that $\theta=\sigma\tau$. We denote the mgu of expressions $E_1,E_2, E_1',E_2'$ with $\mgu(E_1,E_1')$ and $\mgu((E_1,E_2),(E_1',E_2'))$ for the mgu of tuples. 

\paragraph{Saturation-Based Proving and Superposition.}
A saturation-based prover works with clauses.
To prove a theorem $G$ from axioms $A_1,\dots,A_n$ (assumed to be clauses), the prover: (1) negates and clausifies $G$, obtaining clauses $\cnf(\lnot G)=\{C_1,\dots, C_m\}$; (2) forms the \emph{saturation set} $S = \{C_1,\dots, C_m, A_1,\dots, A_n\}$; (3) repeats the following: chooses clauses $D_1,\dots, D_k \in S$, uses an \emph{inference rule} to derive a new clause $D$ from $D_1,\dots,D_k$, and adds $D$ into $S$. 
Deriving the empty clause $\square$ at any point in step~(3) concludes the proof, because it means that $\lnot G$ is inconsistent with the axioms, which is equivalent to $G$ following from the axioms.
A common \emph{calculus} (a set of inference rules) used in saturation-based provers is the \emph{superposition calculus}~\cite{NieuwenhuisRubio:HandbookAR:paramodulation:2001}.
The calculus is parametrized by a \emph{simplification order} $\succ$ on terms (see next paragraph) and a \emph{selection function}, which selects in each non-empty clause some non-empty subset of literals.
We denote selected literals by underlining them.
An inference rule can be applied to the given premise(s) if the literals selected in the rule are also selected in the premise(s).
For a certain class of selection functions, the superposition calculus is \textit{sound} (if $\square$ is derived from $F$, then $F$ is unsatisfiable) and
\textit{refutationally complete} (if $F$ is unsatisfiable, then $\square$ can be derived from it).

\paragraph{Rewriting and Simplification orders.} \label{subsection:orders}
A binary relation $\to$ over the set of terms is a \emph{rewrite relation} if (i) $l\to r \Rightarrow l\theta\to r\theta$ and (ii) $l\to r \Rightarrow s[l]\to s[r]$ for any terms $l$, $r$, $s$ and substitution $\theta$. We write $\leftarrow$ to denote the inverse of $\to$. We call an ordered pair $l\to r$ a \emph{rewrite rule} if (i) $l$ is not a variable and (ii) $l$ contains all variables that occur in $r$. A \emph{rewrite system} $R$ is a set of rewrite rules. We denote by $\to_R$ the smallest rewrite relation that contains $R$. A term $l$ is \emph{irreducible} in $R$ if there is no $r$ s.t. $l\to_R r \in R$. We denote with $\to^*_R$ the reflexive-transitive closure of $\to_R$. A term $r$ is a \emph{normal form} of a term $l$ w.r.t $R$ if $l\to^*_Rr$ and $r$ is irreducible in $R$. A \emph{rewrite order} is a strict (irreflexive) rewrite relation. A \emph{reduction order} is a well-founded rewrite order. We consider reduction orders which are total on ground terms; such orders are also called \emph{simplification orders}.
A \emph{precedence relation}, denoted by $\gg$, is a total order on the signature $\Sigma$. 

The \emph{lexicographic path order (LPO)}, denoted by $\succ_\lpo$, is parameterized by a precedence relation $\gg$. Let   $s,t$ be terms with  $s=f(s_1,...,s_n)$ and $\succ_{\lpo}^{\lex}$ the standard lexicographic ordering extension of $\succ_{\lpo}$. We write  $ s \succ_\lpo t$ if:
\begin{enumerate}
\item\label{prop:lpo1}  $t$ is a variable and a proper subterm of $s$, or
\item\label{prop:lpo2} there is $i \in \{1, \dots, n\}$ such that $s_i\succeq_\lpo t$, or 
\item\label{prop:lpo3} $t=f(t_1,...,t_n)$, $(s_1,\dots,s_n)\succ_{\lpo}^{\lex} (t_1,\dots,t_n)$ and $s \succ_{\lpo} t_j$, for all $j \in \{1,\dots,n\}$,or 
\item\label{prop:lpo4} $t=g(t_1,...,t_m)$, $f\gg g$ and $s \succ_{\lpo} t_j$, for all $j \in \{1,\dots,m\}$.
\end{enumerate}
It is known that LPOs are simplification orders~\cite{NieuwenhuisRubio:HandbookAR:paramodulation:2001}. 

We view equalities as \emph{bags}, finite multisets, and define and extend the term orderings on bags. The \emph{bag extension} for an ordering $\succ$ on a set $X$ is a binary relation on bags over $X$, denoted by $\succ^{\mathsf{bag}}$, defined as the smallest transitive
relation on bags such that
$\{x,y_1,\ldots,y_n\}\succbag \{x_1,\ldots,x_m,y_1,\ldots,y_n\}$
if $x\succ x_i$ for all $i \in \{1,\dots,m\}$ and $m\geq 0$. If $\succ$ is well-founded, then $\succbag$ is too.
We order equality literals by mapping each equality $s\eqs t$ to the bag $\{s,t\}$ and each disequality $s\neqs t$ to the bag $\{s,s,t,t\}$, and then using $\succ^{\mathsf{bag}}$ over these bags. Finally, we order clauses by taking the bag extension of the bag extension for literals. We only write $E_1 \succ E_2$ for two expressions $E_1$ and $E_2$ when it is clear from the context which ordering is used. We also write $E_1 \succeq E_2$ instead of $E_1 \succ E_2 \lor E_1 = E_2$.

\section{The Synthesis Problem and How to Solve It} \label{section:problem}
In this section, we adjust  notions~\cite{SynSat}  to the purposes of our work. 
%
We assume a fixed signature $\Sigma$.

\begin{definition}[Synthesis specification]
Let $\Sigma_c$ be a subset of $\Sigma$ and $\varphi$ a closed formula in first-order logic over $\Sigma$ of the form:
\begin{equation}
\forall \bar{x}\exists y. F[\bar{x}, y]. \label{eq:spec}
\end{equation}
We define a \emph{synthesis specification}, or simply \emph{specification},  $\Lambda$ as the pair $\langle \Sigma_c, \varphi\rangle$. We call any symbol in $\Sigma_c$ \emph{computable w.r.t. $\Lambda$}, or just \emph{computable} when $\Lambda$ is clear from the context. Further, 
any symbol in $\Sigma\setminus\Sigma_c$  is \emph{uncomputable w.r.t. $\Lambda$}, or  just \emph{uncomputable}. We denote the set of uncomputable symbols by $\Sigma_u$.

An expression containing an uncomputable symbol is called \emph{uncomputable}.
Expressions that are not uncomputable are called \emph{computable}.\QED
\end{definition}%
%
To synthesize programs for specifications, we use an if-then-else term constructor, denoted by $\iteF$.
\begin{definition}[$\iteF$ and program terms]\label{def:programterms}
We define the \emph{conditional term constructor} $\iteF$ as follows.
Let $F$ be a formula over $\Sigma$ and $p, q$ be terms.
When $F$ is true, then $\iteF(F, p, q)$ is interpreted as the interpretation of $p$; otherwise, $\iteF(F,p,q)$ is interpreted as the interpretation of $q$.
We call $F$ the \emph{$\iteF$-condition}. 

We define \emph{program terms} as the smallest set $\mathcal{P}$ such that (i) $\mathcal{T}\subseteq\mathcal{P}$, and (ii) $\iteF(l\eqs r,p,q)\in\mathcal{P}$ for any $l,r\in\mathcal{T}$, and $p,q\in\mathcal{P}$. We denote program terms by $p$, $q$, possibly with indices. A term that does not contain a conditional term constructor is called a \emph{simple term}.\QED
\end{definition}
%
We note that any term with the $\iteF$ symbol containing arbitrary Boolean expressions as conditions can be translated into an equivalent program term in $\mathcal{P}$.

\begin{definition}[Solution for a specification]
Let $\Lambda=\langle \Sigma_c, \forall\bar{x}\exists y.F[\bar{x},y]]\rangle$ be a specification. A computable $p[\bar{x}]\in\mathcal{P}$ is called a \emph{solution} for the specification $\Lambda$ if $\forall\bar{x}.F[\bar{x},p[\bar{x}]\theta]$ is valid for an arbitrary substitution $\theta$ grounding $p[\bar{a}]$ for fresh constants $\bar{a}$.\QED
\end{definition}
Note that if $\bar{x}$ are the only variables in $p[\bar{x}]$, then the condition for $p[\bar{x}]$ being a solution reduces to $\forall\bar{x}.F[\bar{x},p[\bar{x}]]$ being valid.
On the other hand, if $p[\bar{x}]$ contains variables $\bar{z}$ other than $\bar{x}$, then $p[\bar{x}]$ represents a class of solutions, one for each grounding of $\bar{z}$.
\begin{definition}[Realizable specification]
Let $\Lambda$ be a specification. We call $\Lambda$ \emph{realizable} if there exists a computable program term that is a solution to it.\QED
\end{definition}
We are now ready to state the problem that is the focus of the present paper.
\begin{mdframed}[frametitle={\fcolorbox{black}{white}{~The Synthesis Problem under Realizability Assumptions~}}, innertopmargin=0pt, innerbottommargin=8pt, frametitleaboveskip=-5pt, frametitlealignment=\center, frametitlefont=\scshape]
Given a realizable specification $\Lambda$, the \emph{synthesis problem under realizability assumptions} is the task of finding a solution to $\Lambda$.
\end{mdframed}
In the following, we refer to the synthesis problem under realizability assumptions simply as \emph{the synthesis problem}.


\paragraph{Saturation-based synthesis framework.}
One approach to solving synthesis problems is the saturation-based synthesis framework introduced in~\cite{SynSat}, which extends a saturation-based first-order theorem prover into a synthesizer.
The framework constructs a program for specification $\Lambda$ in parallel with searching for a proof that specification~\eqref{eq:spec} holds.
Intuitively, $y$ in~\eqref{eq:spec} represents the output for the inputs $\bar{x}$, and thus substitutions into $y$ in the proof correspond to fragments of the sought program.
To track these substitutions throughout the proof, the framework of~\cite{SynSat} uses a mechanism called answer literals~\cite{Green69}. 
In this paper, we use constrained clauses, called \emph{answer clauses}, rather than adding answer literals to clauses.
\begin{definition}[Answer clause]
Let $C$ be a clause and $p$ a program term. We call the expression $\ansC{C}{p}$ an \emph{answer clause}. We call $p$ the \emph{answer} for the answer clause $\ansC{C}{p}$.
We denote answer clauses with $\AC$ and $\AD$, possibly with indices. Given a substitution $\sigma$, we write $\ansC{C}{p}\sigma$ to denote $\ansC{C\sigma}{p\sigma}$.\QED 
\end{definition}
W.l.o.g. we assume that axioms $A_1,\dots,A_n$ are a part of specification~\eqref{eq:spec}; that is,  $F[\bar{x}, y]$ is of the form $(A_1\land\dots\land A_n)\rightarrow G[\bar{x}, y]$.
To derive a program, the framework first preprocesses~\eqref{eq:spec} like a saturation-based prover would:
the formula $\lnot\forall\bar{x}\exists y.F[\bar{x}, y]$ is converted to the equivalent $\exists\bar{x}\forall y.\lnot F[\bar{x}, y]$ and skolemized, obtaining the equisatisfiable $\forall y.\lnot F[\bar{\alpha}, y]$,\footnote{Since the skolems $\bar{\alpha}$ represent the input of the sought program, they are computable.} which is then clausified, resulting into $\cnf(\lnot F[\bar{\alpha}, y]) = \{ C_1,\dots,C_m\}$.  
The framework then extends the clauses $C_1,\dots,C_m$, among which are also the axioms, into answer clauses $\ansC{C_1}{y},\dots,\ansC{C_m}{y}$,\footnote{In~\cite{SynSat}, only the clauses $C$ containing $y$ are extended into answer clauses $\ansC{C}{y}$. In this paper we extend \emph{all} clauses $C$ in preprocessing, including axioms, into answer clauses $\ansC{C}{y}$.}
and initializes the saturation set $S = \{\ansC{C_1}{y},\dots,\ansC{C_m}{y}\}$, called \emph{the initial set from $\Lambda$}.
We denote the consecutive steps of clausification and extending clauses into answer clauses by $\ansC{\cnf(\lnot F[\bar{\alpha}, y])}{y}$.
\begin{definition}[Semantics of answer clauses]
\label{def:semantics}
$\ansC{C}{p}$ \emph{is true} with respect to a given specification $\Lambda=\langle \Sigma_c, \forall\bar{x}\exists y.F[\bar{x},y]\rangle$, if the universal closure of $C\lor F[\bar{\alpha},p]$ is true, where w.l.o.g. we assume $\ansC{C}{p}$ and $F[\bar{\alpha},y]$ to have disjoint sets of variables.\footnote{\cite{SynSat} uses analogous semantics while denoting $\ansC{C}{p}$ by $C\lor\ans(p)$.}\QED 
\end{definition}
To work with answer clauses, we use an  extension of the superposition calculus~\cite{NieuwenhuisRubio:HandbookAR:paramodulation:2001}.
Using this calculus, we \emph{saturate} $S$ like a saturation-based prover: we extend the set  $S$ by new clauses derived by rules from the calculus based on premises already present in the set.
We call $S$ the \emph{set saturated from $\Lambda$}.
The work~\cite{SynSat} proves that the saturation approach is sound: it only derives valid answer clauses, and thus when the answer clause $\ansC{\square}{p[\bar{\alpha}]}$ is derived, it is guaranteed that $p[\bar{x}]$ is a solution for $\Lambda$.
However, \cite{SynSat} makes no completeness claims.
In the following section, we show an example for which the approach fails to derive a program even though a computable program exists.

\section{Superposition with Synthesis} \label{section:calculus}

In this section, we present our new calculus \syncal,  adapted from~\cite{SynSat}. We discuss the orderings and selection constraints that guide \syncal 
(\Cref{subsec:calculus}).
We also introduce an abstraction mechanism that separates computable and uncomputable terms via an inference rule, in addition to \syncal (\Cref{subsec:abstract}).

\subsection{The \syncal Calculus}\label{subsec:calculus}
Our new calculus for  Superposition with Realizability Assumptions, in short \syncal, is summarized in \Cref{fig:syncal}. 
Compared to~\cite{SynSat}, we do not use so-called computable unifiers (abstracting unifiers preventing uncomputable symbols from appearing in the answer term), but most general unifiers and only allow an inference when the answer term after applying $\sigma$ is computable.
Further, we include the rule $\Sup_U$, which unifies the answer terms $p, q$ from premises, while~\cite{SynSat} included a superposition rule which did not unify $p, q$ but rather added a constraint $p\neqs q$ into the resulting clause. 
Finally, for simplicity, we work only with equality predicates;  thus, we do not include binary resolution or factoring rules as in~\cite{SynSat}.
\ifbool{shortversion}{}{
\begin{remark}\label{remark:rules}
    The two variants of Superposition are both needed in the synthesis setting. Applying the superposition rule branches the proof upon the equality literal. The $\Sup_C$ rule reflects this branching in the answer by introducing an $\iteF$ constructor with the equality literal in the condition. However, since the answers cannot contain uncomputable symbols, and even a computable non-ground condition could become uncomputable by a subsequent substitution, we need an additional mechanism for applying superposition. This is provided by $\Sup_U$, which is applied if the answers from the premises are unifiable. Since it is not clear upfront which rule makes progress, our framework uses both of them in parallel.
\end{remark}
}

We next establish soundness of  \syncal; this result guarantees that the programs  derived by \syncal are solutions of the synthesis specification.\footnote{We provide detailed proofs for all our results in \ifbool{shortversion}{the extended version \cite{synreal-preprint} of this paper.}{the Appendix.}} 
\begin{restatable}[Soundness]{theorem}{Soundness}\label{th:soundness}
The \syncal calculus is sound with respect to the semantics of answer clauses.
\end{restatable}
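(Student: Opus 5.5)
The proof goes rule by rule. For each inference of \syncal{} (\Cref{fig:syncal}) we show that if the premises are true answer clauses in the sense of \Cref{def:semantics}, then so is the conclusion. Soundness then follows by induction on the length of the derivation. The base case is the initial set $\ansC{\cnf(\lnot F[\bar{\alpha}, y])}{y}$. Each initial answer clause $\ansC{C_i}{y}$ is true, because the universal closure of $C_i\lor F[\bar{\alpha},y]$ is valid. Indeed, any interpretation and assignment to $y$ that falsify $F[\bar{\alpha},y]$ satisfy $\lnot F[\bar{\alpha},y]$, and hence every clause of its CNF. For the Skolem-free part this is direct. Since clausification may introduce fresh Skolem symbols, we argue model-wise: any model of $\forall y.\lnot F[\bar\alpha,y]$ extends to a model of the CNF. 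Because those symbols are uncomputable and never occur in answers, truth is preserved in the sense the paper intends. Two facts are used throughout: truth of answer clauses is closed under instantiation, since $(C\lor F[\bar\alpha,p])\sigma = C\sigma\lor F[\bar\alpha,p\sigma]$ by variable disjointness; and the side condition that $p\sigma$ be computable plays no role in soundness.

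For the unary rules (equality resolution and equality factoring), the premise $\ansC{C}{p}$ is true, so $\ansC{C\sigma}{p\sigma}$ is true. The ordinary soundness of these rules, applied to the clause $C\sigma\lor F[\bar\alpha,p\sigma]$ and treating the $F$-part as extra side literals, gives the conclusion $\ansC{D}{p\sigma}$. The $\Sup_U$ rule is handled the same way. Here $\sigma$ unifies the answers, so both instantiated premises share the disjunct $F[\bar\alpha,p\sigma]=F[\bar\alpha,q\sigma]$. Ordinary superposition on $C\sigma\lor F$ and $D\sigma\lor F$ yields $(\text{conclusion})\lor F\lor F$, which is equivalent to the required clause.

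The main obstacle is $\Sup_C$, whose conclusion has answer $\iteF(l\eqs r, q, p)\sigma$, with the equality $l\eqs r$ taken from the left premise and the orientation as in the figure. Fix an interpretation and a grounding of all variables, and suppose the conclusion's clause part is false. We split on the truth of $(l\eqs r)\sigma$. If it is true, the $\iteF$ term denotes the answer $q\sigma$ of the right premise $\ansC{D[s]\lor\cdots}{q}$. Since $l\sigma=s\sigma$ in value, rewriting preserves the truth of the right premise's literals. So the right premise instance forces either a side literal, which is false by assumption, or $F[\bar\alpha,q\sigma]$, which is what we need. If it is false, the left premise $\ansC{l\eqs r\lor C'}{p}$ gives either $C'\sigma$, which is false, or $F[\bar\alpha,p\sigma]$, which is the $\iteF$ value. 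Two details need care. First, the $\iteF$ is interpreted semantically per the definition of program terms, so $F[\bar\alpha,\iteF(\dots)]$ must be evaluated pointwise. Second, the premises must first be renamed apart so that $\sigma$ acts uniformly on them. With these checks every rule preserves truth, and when $\ansC{\square}{p[\bar\alpha]}$ is derived, $F[\bar\alpha,p[\bar\alpha]]$ is valid. Since the $\bar\alpha$ are fresh, $p[\bar x]$ is a solution.
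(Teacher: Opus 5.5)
Your proposal is correct and follows essentially the paper's proof: the paper likewise verifies truth preservation rule by rule, pulling a falsifying assignment $v$ back to the premises via $v'(x)=I_v(x\sigma)$ (this is exactly your closure-under-instantiation step), and for $\Sup_C$ it splits on whether $(l\eqs r)\sigma$ holds, just as you do. Two small remarks: in the $\Sup_C$ case it is $l\sigma$ and $r\sigma$ (not $l\sigma$ and $s\sigma$) that agree in value, and your treatment of the initial set belongs to Corollary~\ref{cor:soundness} rather than to this theorem (there, the claim actually needed is that every interpretation expands, by choosing Skolem witnesses only where $\lnot F[\bar\alpha,y]$ holds, to one satisfying every initial answer clause).
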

\noindent
It follows from~\Cref{th:soundness} that \syncal derives solutions to synthesis specifications.
\begin{restatable}[Solution to the Synthesis Problem]{corollary}{Cor}\label{cor:soundness}
Given a synthesis specification $\Lambda=\langle \Sigma_c, \forall\bar{x}\exists y.F[\bar{x},y]\rangle$, if \syncal derives $\ansC{\square}{p[\bar{\alpha}]}$ from the initial set of $\Lambda$, then $p[\bar{x}]$ is a solution for $\Lambda$.
\end{restatable}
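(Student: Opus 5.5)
We derive the corollary directly from \Cref{th:soundness} together with the semantics of answer clauses (\Cref{def:semantics}). First we check that every clause in the initial set $\ansC{\cnf(\lnot F[\bar{\alpha}, y])}{y}$ is true with respect to $\Lambda$. For each $\ansC{C_i}{y}$ we must show that the universal closure of $C_i \lor F[\bar{\alpha},y]$ holds, with the variables of $F[\bar\alpha,y]$ renamed apart. Since the $C_i$ together form the CNF of $\lnot F[\bar\alpha,y]$, each $C_i$ is implied by $\lnot F[\bar\alpha,y]$ for that same $y$. Hence $\lnot F[\bar\alpha,y]\rightarrow C_i$, which gives the tautology $C_i\lor F[\bar\alpha,y]$ for every $y$.

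The one subtle point is that clausification may introduce Skolem or naming symbols that do not occur in $F$. We handle this by reading "true" as truth in all models of the Skolem-extended formula, i.e.\ in every expansion of a model of the original signature. This is the standard treatment, and we take it to be the reading under which \Cref{th:soundness} is stated. Since $C_i\lor F[\bar\alpha,y]$ is tautological in the relevant sense, this causes no difficulty.

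By \Cref{th:soundness}, every answer clause derived by \syncal from this initial set is true. In particular $\ansC{\square}{p[\bar{\alpha}]}$ is true, which by \Cref{def:semantics} means that the universal closure of $\square\lor F[\bar\alpha,p[\bar\alpha]]$, i.e.\ of $F[\bar\alpha,p[\bar\alpha]]$, holds. Its variables are those of $p$ other than $\bar x$, together with any variables of $F$ renamed apart. This closure holds in every model, where $\bar\alpha$ are fresh constants not occurring in $\Lambda$; here we use that $\bar\alpha$ are Skolem constants from the negated conjecture.

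From this we conclude that $\forall\bar{x}.F[\bar{x},p[\bar{x}]\theta]$ is valid for every grounding substitution $\theta$ of the extra variables. Since $\bar\alpha$ are fresh, the standard "generalization on constants" argument lets us replace $\bar\alpha$ by universally quantified $\bar x$. The universally quantified extra variables of $p$ can then be instantiated by any $\theta$. Finally, $p$ is computable because the rules of \syncal only permit inferences whose resulting answer is computable; the initial answer $y$ is trivially computable, and the $\bar\alpha$ are computable by the footnote on skolemization. Hence $p[\bar x]$ is a solution for $\Lambda$.

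The main obstacle is the bookkeeping in the step from $\bar\alpha$ to $\bar x$. This generalization requires that $\bar\alpha$ occur nowhere else, and that $p$'s extra variables do not clash with $\bar x$. Both hold by construction, but they need to be stated explicitly.
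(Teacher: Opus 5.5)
Your proposal is correct and follows the paper's own argument. The initial answer clauses are true, \Cref{th:soundness} propagates truth, $\ansC{\square}{p[\bar{\alpha}]}$ yields $F[\bar{\alpha},p[\bar{\alpha}]]$, and the fresh $\bar{\alpha}$ are generalized to $\bar{x}$; your checks that $p$ is computable and that its extra variables are handled by $\theta$ fill in points the paper leaves implicit.

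One small correction to your Skolem remark: $C_i\lor F[\bar{\alpha},y]$ holds only in the Skolem expansion of a model, where the new symbols are interpreted as witnesses (resp.\ by their definitions). It does not hold in every expansion. So run the argument in that expansion, which is legitimate because \Cref{th:soundness} is proved per interpretation, and then transfer $F[\bar{\alpha},p[\bar{\alpha}]]$ back to the original model using that the computable $p$ contains no such symbols.
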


\begin{figure}[t]
\begin{center}
\begin{tabular}{c c l}

\multirow{4}{*}{
\AxiomC{$\sel{l\eqs r}\lor \ansC{C}{p}$}
\AxiomC{$\ansC{\sel{s[l']\deqs t}\lor D}{q}$}
\LeftLabel{($\Sup_C$)}
\BinaryInfC{$\ansC{s[r]\deqs t\lor C \lor D}{\iteF(l\eqs r,q,p)}\sigma$}
\DisplayProof
}
& \multirow{4}{*}{where} & (1) $\sigma=\mgu(l,l')$\\
& & (2) $l'$ is not a variable\\
& & (3) $r\sigma \nsucceq l\sigma$, $t\sigma  \nsucceq s[l']\sigma$ \\
& & (4) $\mathsf{ite}(l\eqs r,q,p)\sigma$ is computable \\

\\

\multirow{4}{*}{
\AxiomC{$\ansC{\sel{l\eqs r}\lor C}{p}$}
\AxiomC{$\ansC{\sel{s[l']\deqs t}\lor D}{q}$}
\LeftLabel{($\Sup_U$)}
\BinaryInfC{$\ansC{s[r]\deqs t\lor C \lor D}{p}\sigma$}
\DisplayProof
}
& \multirow{4}{*}{where} & (1) $\sigma=\mgu((l,p),(l',q))$\\
& & (2) $l'$ is not a variable\\
& & (3) $r\sigma \nsucceq l\sigma$, $t\sigma  \nsucceq s[l']\sigma$ \\
& & (4) $p\sigma$ is a computable simple term\\

\\

\multirow{2}{*}{
\AxiomC{$\ansC{\sel{s\neqs t}\lor C}{p}$}
\LeftLabel{($\EqRes$)}
\UnaryInfC{$\ansC{C}{p}\sigma$}
\DisplayProof
}
& \multirow{2}{*}{where} & (1) $\sigma=\mgu(s,t)$\\
& & (2) $p\sigma$ is computable\\

\\[.5em]

\multirow{3}{*}{
\AxiomC{$\ansC{\sel{s\eqs t}\lor l\eqs r \lor C}{p}$} 
\LeftLabel{($\EqFac$)}
\UnaryInfC{$\ansC{s\eqs t\lor t\neqs r \lor C}{p}\sigma$}
\DisplayProof
}
& \multirow{3}{*}{where} & (1) $\sigma=\mgu(s,l)$\\
& & (2) $t\sigma \nsucceq s\sigma$, $r\sigma \nsucc t\sigma$\\
& & (3) $p\sigma$ is computable\\[.5em]
\end{tabular}
\end{center}
\caption{Rules of \syncal, where all maximal literals in a clause are selected (denoted underlined), and $\ansC{C}{p}\sigma$ denotes the application of $\sigma$ on $\ansC{C}{p}$.}
\label{fig:syncal}
\end{figure}
\noindent
The rules of \syncal are, however, not the only ingredient necessary to guarantee that a solution for a realizable specification will be derived.
The following example illustrates how a derivation can get stuck.

\begin{example}\label{ex:workshop}
Our motivating example corresponding to the specification~\eqref{eq:motivating-formula} can be expressed as a synthesis specification with computable symbols $\fri$, $\sat$, $\paar$ and $\vamp$. That is, 
\(\Lambda : \ \langle\{\fri,\sat,\paar,\vamp\},\varphi\rangle\).
We consider predicate symbols as function symbols mapping values to $\{\top,\:\perp\}$ in a two-sorted logic.
We will try to derive a solution for $\Lambda$ using \syncal.
The initial set from $\Lambda$, where $\alpha$ is the Skolem constant introduced for $x$ and numbers denote clause labels, is
$\{\text{(1) }\ansC{\alpha\eqs \fri \lor \sel{\alpha\eqs \sat}}{y},
\text{(2) }\ansC{\sel{ \alpha\neqs \sat} \lor \workshop\paar}{y},
\text{(3) }\ansC{\sel{\alpha\neqs \fri} \lor \workshop\vamp}{y},
\text{(4) }\ansC{\lnot\sel{\workshop y}}{y}\}$.
We use a selection function that selects either (i) all maximal literals or (ii) at least one negative literal.\footnote{Such a selection function is called \emph{well-behaved} and is sufficient to achieve refutational completeness in standard superposition reasoning.}
Let $\succ$ be an LPO with the following symbol precedence: $\sat\gg\fri\gg\paar\gg\vamp\gg\workshop\gg\alpha$.
We obtain the following derivation, where we write $\SE$ for a $\Sup$ inference followed by an $\EqRes$ inference:
\begin{center}
\small
\begin{prooftree}
\AxiomC{\scriptsize (1)}
\noLine
\UnaryInfC{$\ansC{\alpha\eqs \fri \lor \sel{\alpha\eqs \sat}}{y}$}
\AxiomC{\scriptsize (2)}
\noLine
\UnaryInfC{$\ansC{\sel{ \alpha\neqs \sat} \lor \workshop\paar}{y}$}
\LeftLabel{\scriptsize ($\SE$)}
\BinaryInfC{$\ansC{\sel{\alpha\eqs \fri} \lor \workshop\paar}{y}$}
\AxiomC{\scriptsize (3)}
\noLine
\UnaryInfC{$\ansC{\sel{\alpha\neqs \fri} \lor \workshop\vamp}{y}$}
\LeftLabel{\scriptsize ($\SE$)}
\BinaryInfC{$\ansC{\sel{\workshop\paar}\lor \workshop\vamp}{y}$}
\AxiomC{\scriptsize (4)}
\noLine
\UnaryInfC{$\ansC{\lnot\sel{\workshop y}}{y}$}
\LeftLabel{\scriptsize ($\SE$)}
\BinaryInfC{$\ansC{\sel{\workshop\vamp}}{\paar}$}
\AxiomC{\scriptsize (4)}
\noLine
\UnaryInfC{$\ansC{\lnot\sel{\workshop y}}{y}$}
\BinaryInfC{{\scriptsize\color{gray}{}(inference not possible)}}
\noLine
\UnaryInfC{}
\end{prooftree}
\end{center}
Here we get stuck, as  no more inferences are possible. Superposition into $\text{(4) }\ansC{\lnot \workshop y}{y}$ with $\ansC{\workshop\vamp}{\paar}$ cannot be applied, because $\vamp$ and $\paar$ do not unify; moreover, we cannot create an $\iteF$ with condition $\workshop\vamp$, because $\workshop\in\Sigma_u$.
However, as mentioned before, $\Lambda$ is realizable, e.g. by the solution $\iteF(x\eqs\fri,\vamp,\paar)$.
We did not derive a solution due 
to the selection of computable literals in clauses which also contained uncomputable literals.\QED

\end{example}
\begin{remark}\label{remark:synsat}
Since~\cite{SynSat} does not restrict selection nor simplification orders in any way, the failed derivation from \Cref{ex:workshop}  can be  replicated by the calculus of~\cite{SynSat}.
Therefore, the calculus~\cite{SynSat} does not always find a solution if one exists -- it is not complete under realizability assumptions.\QED
\end{remark}
Based on~\Cref{ex:workshop}, we observe that we have to select uncomputable literals before computable ones.
To this end, we make uncomputable terms bigger than computable terms in the simplification order and restrict the selection functions admissible for \syncal to always select all maximal literals. 
We formally refine the simplification order used by \syncal as follows.
\begin{definition}[Partitioned ordering]
An ordering $\succ$ on terms is called \emph{partitioned} if for any ground uncomputable term $s$ and any ground computable term $t$ it holds that $s\succ t$.\QED
\end{definition}
We require that the simplification order that parameterizes \syncal is partitioned.
In the rest of the paper, we will use an LPO with a precedence function where any uncomputable symbol is greater than any computable symbol. Not only can these types of LPOs be used for \syncal, but also specific KBOs that are partitioned orderings, \ifbool{shortversion}{see~\cite{synreal-preprint} for further discussion.}{see~\Cref{appendix:ordering} for further discussion.}
\begin{example}\label{ex:workshop2}
Let $\succ$ be an LPO based on $\workshop \gg \vamp \gg \paar \gg \sat \gg \fri$.
With a selection function selecting all maximal literals w.r.t. $\succ$, we find a solution for $\Lambda$ from~\Cref{ex:workshop}:

\begin{center}
\begin{prooftree}
\def\defaultHypSeparation{\hskip .05in}
\AxiomC{\scriptsize (2)}
\noLine
\UnaryInfC{$\ansC{\alpha\neqs \sat \lor \sel{\workshop\paar}}{y}$}
\AxiomC{\scriptsize (4)}
\noLine
\UnaryInfC{$\ansC{\lnot\sel{\workshop y}}{y}$}
\LeftLabel{\scriptsize ($\SE$)}
\BinaryInfC{$\ansC{\sel{\alpha\neqs \sat}}{\paar}$}
\AxiomC{\scriptsize (1)}
\noLine
\UnaryInfC{$\ansC{\alpha\eqs \fri \lor \sel{\alpha\eqs \sat}}{y}$}
\LeftLabel{\scriptsize ($\SE$)}
\BinaryInfC{$\ansC{\sel{\alpha\eqs \fri}}{\paar}$}
\AxiomC{\scriptsize (4)}
\noLine
\UnaryInfC{$\ansC{\lnot\sel{\workshop y}}{y}$}
\AxiomC{\scriptsize (3)}
\noLine
\UnaryInfC{$\ansC{\alpha\neqs \fri \lor \sel{\workshop\vamp}}{y}$}
\RightLabel{\scriptsize ($\SE$)}
\BinaryInfC{$\ansC{\sel{\alpha\neqs \fri}}{\vamp}$}
\LeftLabel{\scriptsize ($\SE$)}
\BinaryInfC{$\ansC{\square}{\itecons(\alpha\eqs\fri,\vamp,\paar)}$}
\end{prooftree}
\end{center}
\vspace{-1em}
\QED
\end{example}
%
%
\subsection{Abstraction of Computable Terms in Uncomputable Literals}\label{subsec:abstract}
While our motivating example~\eqref{eq:motivating-formula} shows that inferences with uncomputable literals should be applied before inferences with computable literals, sometimes this cannot be enforced, as some uncomputable equations can only be used in a superposition inference after performing a superposition into them with a computable equation.
\begin{example}\label{ex:abs}
Consider constants $a,b,c,d,e$, unary symbols $f, g, h$, and the specification:
\[\langle \{a,b,c,d,e\},\ \exists y.(fd\neqs e \lor (d\neqs c\land gy\eqs ga) \lor (fc\eqs e\land hy\eqs hb))\rangle.\]
We perform the following derivation using an LPO $\succ$ with $h\gg g\gg f\gg e\gg d\gg c\gg b\gg a$. 
The leaves are obtained from the initial set of the above specification:
\begin{center}
\begin{prooftree}
\def\defaultHypSeparation{\hskip .1in}
\AxiomC{\scriptsize (1)}
\noLine
\UnaryInfC{$\ansC{\sel{fd\eqs e}}{y}$}
\AxiomC{\scriptsize (2)}
\noLine
\UnaryInfC{$\ansC{d\eqs c\lor \sel{gy\neqs ga}}{y}$}
\LeftLabel{{\scriptsize ($\ER$)}}
\UnaryInfC{$\ansC{\sel{d\eqs c}}{a}$}
\LeftLabel{{\scriptsize ($\Sup_U$)}}
\BinaryInfC{$\ansC{\sel{fc\eqs e}}{\ite{d\eqs c}{y}{a}}$}
\AxiomC{\scriptsize (3)}
\noLine
\UnaryInfC{$\ansC{fc\neqs e\lor \sel{hy\neqs hb}}{y}$}
\LeftLabel{{\scriptsize ($\ER$)}}
\UnaryInfC{$\ansC{\sel{fc\neqs e}}{b}$}
\LeftLabel{{\scriptsize ($\Sup_C$)}}
\BinaryInfC{{\scriptsize\color{gray}{}(inference not possible)}}
\end{prooftree}
\end{center}
The last step of the derivation is not possible, because the answers $b$ and $\ite{d\eqs c}{y}{a}$ do not unify, and we cannot construct an $\itecons$ in the answer, because the condition $fc\eqs e$ is uncomputable.\QED
\end{example}
%
The general observation we make from \Cref{ex:abs} is that we should resolve all uncomputable literals, even those containing computable subterms, before applying inferences with computable symbols.
For this purpose, we \emph{abstract the computable subterms of uncomputable literals}, thus separating reasoning with uncomputable and computable symbols.
We do this by adding an auxiliary inference rule $\Abs$, which replaces its premise by its consequence, indicated by the crossed-out premise.
We \emph{apply $\Abs$ exhaustively on each clause before we allow any further inferences}:

\begin{center}
\begin{tabular}{c c c l}

\multirow{7}{*}{
\AxiomC{\cancel{$\ansC{\sel{s[k]\deqs t}\lor C}{p}$}}
\LeftLabel{($\Abs$)}
\UnaryInfC{$\ansC{s[x]\deqs t\lor x\neqs k\lor C}{p}$}
\DisplayProof
}
& \multirow{7}{*}{where} & (1)& $s\not\preceq t$ \\
& & (2)& there is a substitution $\theta$ s.t. $s[k]\theta$ \\
& & & is uncomputable and $k\theta$ is computable\\
& & (3)& no proper superterm of $k$ in $s[k]$ \\
& & & satisfies property (2) \\
& & (4)& $k$ is not a variable \\
& & (5)& $x$ is a fresh variable \\
\end{tabular}
\end{center}
\begin{restatable}{lemma}{absSound}\label{lemma:absSound}
    The rule $\Abs$ is sound with respect to the semantics of answer clauses.
\end{restatable}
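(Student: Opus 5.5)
We have to show that whenever the premise $\ansC{s[k]\deqs t\lor C}{p}$ is true with respect to $\Lambda$ (\Cref{def:semantics}), the conclusion $\ansC{s[x]\deqs t\lor x\neqs k\lor C}{p}$ is true as well. Since $\Abs$ deletes its premise, soundness in the sense used for \Cref{th:soundness} is only about this direction: the conclusion must follow semantically from the premise. Side conditions (1)--(4) only control when the rule is applied and play no role in the argument. Condition (5), freshness of $x$, is what we need.

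By \Cref{def:semantics}, truth of the premise means that the universal closure of $s[k]\deqs t\lor C\lor F[\bar{\alpha},p]$ holds in every model $I$ of the background (i.e.\ the formula is valid), where the variables of the answer clause are disjoint from those of $F$. We then fix an arbitrary interpretation $I$ and an arbitrary assignment $\rho$ of the variables of the conclusion, including the fresh $x$, and show that $s[x]\deqs t\lor x\neqs k\lor C\lor F[\bar{\alpha},p]$ holds under $(I,\rho)$.

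We split into two cases. If $I,\rho\models x\neqs k$, the conclusion holds trivially. Otherwise $\rho(x)$ equals the value of $k$ under $(I,\rho)$. Since $x$ is fresh, it occurs in none of $k$, $t$, $C$, $p$, or $F[\bar{\alpha},p]$. Hence the premise, instantiated at the restriction of $\rho$ to its variables, yields $s[k]\deqs t\lor C\lor F[\bar{\alpha},p]$ under $(I,\rho)$. By congruence of equality (the substitutivity of equals in the term context $s[\cdot]$), $s[k]$ and $s[x]$ evaluate to the same element. Thus $s[k]\deqs t$ and $s[x]\deqs t$ have the same truth value, for both polarities of $\deqs$, and the disjunction transfers to the conclusion.

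The only real subtlety is bookkeeping. We must make sure that the notation $s[x]$ replaces exactly the abstracted occurrence(s) of $k$, which causes no problem since congruence handles any set of occurrences. We must also make sure $x$ is chosen disjoint from the variables of $F[\bar{\alpha},y]$, which follows from the ``w.l.o.g.\ disjoint variables'' convention together with freshness. Finally, the answer $p$ is unchanged by $\Abs$, so no computability condition needs checking.
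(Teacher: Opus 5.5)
Your proof is correct and follows the same idea as the paper's: split on whether $x\neqs k$ holds, and otherwise transfer $s[k]\deqs t$ to $s[x]\deqs t$ by congruence, carrying the $C$ and $F[\bar{\alpha},p]$ disjuncts over unchanged. Your version is slightly more careful than the paper's, since you argue per variable assignment rather than with universal closures of the individual disjuncts.
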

To apply the rule in practice, we use the following syntactic conditions in place of condition (2) of $\Abs$.
\begin{restatable}{lemma}{absConditions}\label{lemma:abstraction-conditions}
Condition (2) of rule $\Abs$ holds iff (2a) $k$ is computable, and (2b) either $s[k]$ is uncomputable, or it contains a variable not contained in $k$, and there exists at least one uncomputable term.
\end{restatable}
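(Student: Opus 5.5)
We prove the two directions separately. Throughout, a substitution $\theta$ only replaces variables, so the function symbols of $e\theta$ are those of $e$ together with those of the terms $x\theta$ for variables $x$ occurring in $e$.

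\emph{($\Rightarrow$)} Assume $\theta$ witnesses condition (2): $s[k]\theta$ is uncomputable and $k\theta$ is computable.

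For (2a), every symbol of $k$ also occurs in $k\theta$. Since $k\theta$ is computable, $k$ is computable.

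For (2b), suppose $s[k]$ is computable. Then some variable $z$ of $s[k]$ has $z\theta$ uncomputable, because otherwise $s[k]\theta$ would be computable. If $z$ occurred in $k$, then $z\theta$ would be a subterm of $k\theta$, and $k\theta$ would be uncomputable, a contradiction. Hence $z$ is a variable of $s[k]$ not contained in $k$. Moreover, $z\theta$ is an uncomputable term, so at least one uncomputable term exists. Thus (2b) holds in both cases.

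\emph{($\Leftarrow$)} Assume (2a) and (2b); we construct $\theta$.

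If $s[k]$ is uncomputable, take $\theta$ to be the identity, or any grounding by computable terms if ground instances are preferred. Then $s[k]\theta$ is still uncomputable, since substitution never removes symbols. Also $k\theta=k$ is computable by (2a).

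Otherwise, $s[k]$ contains a variable $z$ not occurring in $k$, and there is some uncomputable term $u$. Set $\theta=\{z\mapsto u\}$. Then $k\theta=k$ is computable, and $s[k]\theta$ contains $u$, so it is uncomputable.

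The only delicate point is the existence clause in (2b). Uncomputable terms exist exactly when $\Sigma_u$ contains a symbol from which a term of the right sort can be built. The condition is stated globally, but the substituted term must have the sort of $z$. In the ($\Leftarrow$) direction I will therefore read "there exists at least one uncomputable term" as an uncomputable term of the sort of the chosen variable; in the ($\Rightarrow$) direction, $z\theta$ already has that sort. This sort-compatibility reading is the only step I expect to need care. Everything else is the observation that substitution preserves, and can only add, occurrences of symbols.
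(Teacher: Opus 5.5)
Your proof is correct and follows the paper's argument essentially step for step. In both, (2a) holds because the symbols of $k$ survive in $k\theta$, (2b) holds by finding a variable outside $k$ that $\theta$ maps to an uncomputable term, and the converse uses either the identity substitution or the single binding $\{z\mapsto u\}$. Your remark that $u$ must have the sort of $z$ in the multi-sorted setting is a valid refinement that the paper's phrase ``an arbitrary uncomputable term $u$'' silently glosses over.
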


\begin{example}\label{ex:abs-continued}
The derivation from~\Cref{ex:abs} changes as follows when using $\Abs$:
%
%
\begin{center}
\smaller
\begin{prooftree}
\def\defaultHypSeparation{\hskip .1in}
\AxiomC{\scriptsize (2)}
\noLine
\UnaryInfC{\cancel{$\ansC{d\eqs c\lor \sel{gy\neqs ga}}{y}$}}
\LeftLabel{{\scriptsize ($\Abs$)}}
\UnaryInfC{$\ansC{d\eqs c\lor \sel{gy\neqs gx}\lor x\neqs a}{y}$}
\LeftLabel{{\scriptsize ($\ER$)}}
\UnaryInfC{$\ansC{\sel{d\eqs c}\lor \sel{x\neqs a}}{x}$}
\LeftLabel{{\scriptsize ($\ER$)}}
\UnaryInfC{$\ansC{\sel{d\eqs c}}{a}$}
\AxiomC{\scriptsize (1)}
\noLine
\UnaryInfC{\cancel{$\ansC{\sel{fd\eqs e}}{y}$}}
\LeftLabel{{\scriptsize ($\Abs$)}}
\UnaryInfC{$\ansC{\sel{fx\eqs e}\lor x\neqs d}{y}$}
\AxiomC{\scriptsize (3)}
\noLine
\UnaryInfC{\cancel{$\ansC{fc\neqs e\lor \sel{hy\neqs hb}}{y}$}}
\LeftLabel{{\scriptsize ($\Abs$)}}
\UnaryInfC{$\ansC{fc\neqs e\lor \sel{hy\neqs hx}\lor x\neqs b}{y}$}
\LeftLabel{{\scriptsize ($\ER$)}}
\UnaryInfC{\cancel{$\ansC{\sel{fc\neqs e}\lor \sel{x\neqs b}}{x}$}}
\LeftLabel{{\scriptsize ($\Abs$)}}
\UnaryInfC{$\ansC{\sel{fz\neqs e}\lor z\neqs c\lor \sel{x\neqs b}}{x}$}
\LeftLabel{{\scriptsize ($\ER$)}}
\UnaryInfC{$\ansC{\sel{fz\neqs e}\lor z\neqs c}{b}$}
\LeftLabel{{\scriptsize ($\SE$)}}
\BinaryInfC{$\ansC{\sel{y\neqs d} \lor y\neqs c}{b}$}
\LeftLabel{{\scriptsize ($\ER$)}}
\UnaryInfC{$\ansC{\sel{d\neqs c}}{b}$}
\LeftLabel{{\scriptsize ($\SE$)}}
\BinaryInfC{$\ansC{\square}{\ite{d\eqs c}{b}{a}}$}
\end{prooftree}
\end{center} 
Our new $\Abs$ rule allowed us to solve the specification of \Cref{ex:abs}, deriving the program $\ite{d\eqs c}{b}{a}$.\\
\QED
\end{example}



\section{Completeness of\, \syncal under Realizability Assumptions}
\label{section:completeness}
In this section, we prove that the \syncal calculus is \emph{complete with respect to realizability assumptions}.
In particular, we show that if there exists a solution to the specification $\Lambda$, then \syncal used exhaustively with the rule $\Abs$ is guaranteed to derive some (possibly different) computable program, which is also a solution for $\Lambda$ since \syncal is sound (see \Cref{subsec:calculus}).
Our main result is the following theorem.

\begin{theorem}[Completeness under Realizability] \label{theorem:completeness}
Let $\Lambda=\langle \Sigma_c, \forall\bar{x}\exists y.F[\bar{x},y]\rangle$ be a specification. If $\Lambda$ is realizable, then \syncal with $\Abs$ derives an answer clause of the form $\ansC{\square}{p[\bar{\alpha}]}$ from $\Lambda$ where $p[\bar{x}]$ is a computable program term that is a solution to $\Lambda$.
\end{theorem}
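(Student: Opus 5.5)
We reduce to refutational completeness of ordinary superposition by a lifting argument that runs on answer clauses. Fix a computable solution $p_0[\bar{x}]\in\mathcal{P}$. Then $\forall \bar{x}.F[\bar{x},p_0[\bar{x}]\theta]$ is valid, so the ground clause set $N_0=\cnf(\lnot F[\bar{\alpha},p_0[\bar{\alpha}]\theta])$ is unsatisfiable. Here we first eliminate the $\iteF$ constructors in $p_0$ by case splitting on their conditions, each of which is a computable equation $l\eqs r$. The solution therefore determines finitely many \emph{branches}. Each branch is a conjunction $\Gamma$ of computable literals over $\bar{\alpha}$ together with a computable simple term $q_\Gamma$ such that $\Gamma\land\lnot F[\bar{\alpha},q_\Gamma]$ is unsatisfiable. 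We argue by induction on the number of $\iteF$ nodes of $p_0$. The goal is to show that for every branch $\Gamma$, \syncal{} with $\Abs$ derives an answer clause $\ansC{\hat{\Gamma}'}{q'}$ with $\hat{\Gamma}'\subseteq\hat{\Gamma}$ consisting of computable literals only, where $\hat\Gamma$ denotes the clause of negated branch literals. The branches are then combined by $\Sup_C$/$\Sup_U$ and $\EqRes$ steps on these computable condition literals; these are exactly the $\SE$ steps of \Cref{ex:workshop2}, and each introduces one $\iteF$ whose condition is computable. The resulting clause is $\ansC{\square}{p[\bar{\alpha}]}$, and \Cref{cor:soundness} gives that $p$ is a solution.

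For a single branch, we set up a ground model construction in the style of Bachmair--Ganzinger, restricted to the ground instances of the saturated set $S$ in which $y$ is instantiated by $q_\Gamma$. The central invariant is a \emph{separation property}, which we establish as a lemma first. Take a clause that has been exhaustively processed by $\Abs$. Because the ordering is partitioned and all maximal literals are selected, if such a clause contains an uncomputable literal, then some uncomputable literal is maximal and selected. Moreover, the computable subterms of that literal occur only as fresh variables constrained by computable disequations $x\neqs k$ (\Cref{lemma:abstraction-conditions}). Hence every inference on a selected literal is one of two kinds. Either it is purely uncomputable, in which case the answer term is only unified and stays a computable simple term, so $\Sup_U$ applies. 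Or it is purely computable, which happens once no uncomputable literal remains, and then the conditions of $\Sup_C$ are computable. We also need that the answer remains computable under the mgu. This follows because uncomputable terms can only be bound to variables that do not occur in the answer: the answer is $y$, or a computable term obtained from it, and abstraction ensures that $y$ is unified only with computable abstracted positions.

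Next, we show that saturation of the uncomputable part eventually produces, for each branch, a clause whose literals are all computable. Suppose it did not. Build the candidate model $I$ from productive ground instances, ordered so that uncomputable literals are larger. Every uncomputable ground instance is then true in $I$ by the standard argument: a false minimal counterexample would give a reductive \syncal{} inference whose conclusion is a smaller counterexample. For this step we need the lifting lemma, which says that ground inferences on instances with answer $q_\Gamma$ lift to \syncal{} inferences whose answers are more general and still computable. That fact is exactly what the separation property provides. The remaining clauses are computable. Their conjunction with the uncomputable part being true would contradict the unsatisfiability of $\Gamma\land\lnot F[\bar\alpha,q_\Gamma]$ unless some derived computable clause is falsified by $\Gamma$, that is, unless it is a subclause of $\hat\Gamma$ up to instantiation. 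This yields the required $\ansC{\hat\Gamma'}{q'}$.

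I expect the main obstacle to be the lifting lemma. Ground inferences in the solution model use the \emph{same} instance $q_\Gamma$ of $y$ in both premises. We must show that the mgu of the pair (literal, answer) exists and leaves the answer computable. We must also handle the case where the answer terms of the two premises already differ, so that $\Sup_U$ fails and $\Sup_C$ is required. I would try first to prove that whenever the two premises have different answers, the superposed literal is computable. This is what the partitioned ordering and $\Abs$ are designed to guarantee, and it would let $\Sup_C$ apply in exactly that case.
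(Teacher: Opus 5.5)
\textbf{Gap in the per-branch step.} You fix a branch $\Gamma$ in advance and build the candidate model from the ground instances of $S$ whose answer is $q_\Gamma$. Nothing makes this model satisfy $\Gamma$. $\Gamma$ is not among the clauses, and adding it as unit clauses would break saturation, because the inferences with those units are not in $S$. So the sentence ``their conjunction \ldots\ would contradict the unsatisfiability of $\Gamma\land\lnot F[\bar\alpha,q_\Gamma]$ unless some derived computable clause is falsified by $\Gamma$, that is, unless it is a subclause of $\hat\Gamma$'' does not follow. All you learn is that the model falsifies some literal of $\Gamma$, which contradicts nothing. What you would actually need is a consequence-finding property, and superposition does not have one. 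For instance, $\{f(a)\eqs c,\ f(b)\eqs d\}$ is saturated and inconsistent with $\Gamma=\{a\eqs b,\ c\neqs d\}$, yet it contains no subclause of $\hat\Gamma$, and neither clause is individually refuted by $\Gamma$. Such infeasible branches do occur: with these computable axioms, a solution may contain $\iteF(a\eqs b,\iteF(c\eqs d,p_1,p_2),p_3)$. There is a second problem with the order of the argument. The minimal-counterexample argument for uncomputable instances already requires all smaller instances, in particular all computable ones, to be true. So the computable clauses cannot be handled afterwards.

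\textbf{Missing idea.} The branch should be chosen by a model of the computable part, not fixed beforehand. The paper first builds $R_{\CGr(S)}$ from \emph{all} computable ground instances of $S$, whatever their answers, and shows it is a model (\Cref{lemma:computable-clauses-completeness}). This is also what your combination step really needs: refutational completeness of the computable fragment, not a few $\Sup$/$\EqRes$ steps. Since the $\iteF$-conditions of the solution are computable ground literals, this model makes exactly one branch active. The solution, normalized in this model, is unrolled into c-clauses carrying the negated path conditions. \Cref{lemma:uncomputable-conditions} then shows two things. First, uncomputable c-clauses with the same condition list share one ground computable answer (or have a variable answer not occurring in the clause), so $\Sup_U$ suffices. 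Second, distinct condition lists contain complementary literals. By condition~3 of productivity in \Cref{def:superposition-model}, a clause of an inactive branch is therefore true through its condition and never has to interact with the active branch. Abstractedness excludes the remaining case, a computable rule rewriting inside a maximal uncomputable literal. This yields a model of $\lnot F[\bar\alpha,t]$ directly, with no per-branch refutations and no combination step.

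\textbf{The lemma in your last paragraph.} This construction also makes that lemma unnecessary, which matters because it is false. Take $\lnot Q\lor P(c)$, $Q\lor P(d)$ and $\lnot P(y)$, all with answer $y$ and with $P,Q$ uncomputable. From these one derives $\ansC{Q}{d}$ and $\ansC{\lnot Q}{c}$, which have different answers and clash on an uncomputable literal. Adding the axiom $P(e)$ makes the specification realizable, and these two clauses are still derived.
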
%
We prove Theorem~\ref{theorem:completeness} similarly to the completeness proof of the superposition calculus~\cite{10.1007/3-540-55253-7_22}, by contradiction.
We assume that $S$ is a set saturated from $\Lambda$ up to redundancy and abstracted (\Cref{def:saturation}), and that $\ansC{\square}{p[\bar{\alpha}]}\notin S$ for any $p[\bar{\alpha}]$. Yet, we assume that $\Lambda$ is realizable and that $t[\bar{x}]$ is a solution to it -- i.e., that $\forall \bar{x}.F[\bar{x}, t[\bar{x}]]$ is valid.
W.l.o.g. we assume $t[\bar{x}]$ does not contain any variables except for $\bar{x}$.
We then consider a grounding of $S$ using $t[\bar{\alpha}]$ (\Cref{def:uncomp-grounding}) and construct an interpretation for it (\Cref{def:superposition-model}).
However, since the grounding of $S$ is equisatisfiable with $\neg F[\bar{\alpha}, t[\bar{\alpha}]]$, which is unsatisfiable because $\forall \bar{x}.F[\bar{x}, t[\bar{x}]]$ is valid, the interpretation should not be a model of $S$.
We then consider the smallest clause $\CC$ in the grounding that is false in the interpretation.
Based on it, we either find an even smaller clause in the grounding that is also false in the interpretation, or show that the clause in $S$, instance of which is $\CC$, should have been removed by $\Abs$, since $S$ is abstracted. In either case, we obtain a contradiction. 
We split this part of the proof into two steps: first, in~\Cref{lemma:computable-clauses-completeness} we show that all computable clauses in the grounding of $S$ are satisfied in the model;  then in~\Cref{lemma:grounding-satisfied} we extend this property to all uncomputable clauses.

Our construction involves substituting program terms into clauses -- the term $t[\bar{\alpha}]$ is substituted for $y$ in $\ansC{\cnf(\lnot F[\bar{\alpha},y])}{y}$.
If $t[\bar{\alpha}]$ contains $\iteF$ terms, we have to unroll them by creating multiple instances of the clause and adding the condition(s) from the $\iteF$ into them. We collect the unrolled condition literals in lists and separate them from the original clause, using so-called conditional clauses, as defined below.
\begin{definition}[Conditional clause, c-clause]
Let $C$ be a clause and $\ELL$ a list of literals.
A \emph{conditional clause}, \emph{c-clause} for short,  is an expression of the form $\cC{C}{\ELL}$. We call $\ELL$ the \emph{condition} of the c-clause $\cC{C}{\ELL}$. The c-clause $\cC{C}{\ELL}$ is logically equivalent to the clause $C\lor \bigvee_{L\in\ELL}L$. We  denote c-clauses with $\CC$ and $\CD$. We use the notation $\cCs{C}$ to denote the c-clause $\cC{C}{\emptylist}$.\QED 
\end{definition}
\noindent
We extend the ordering $\succ$ over clauses to c-clauses as follows.

\begin{definition}[Ordering on c-clauses]
We define an \emph{ordering $\succ$ over c-clauses} as follows. First, we extend $\succ$ to lists of literals. We have $\ELL\succ\emptylist$ for any non-empty $\ELL$ and $\lapp{L}{\ELL}\succ \lapp{K}{\KAY}$ if either (i) $L\succ K$ or (ii) $L=K$ and $\ELL\succ \KAY$. We have $\cC{C}{\ELL}\succ \cC{D}{\KAY}$ if either (i) $C\succ D$, or (ii) $C=D$ and $\ELL\succ \KAY$.\footnote{This ordering preserves well-foundedness.}\QED
\end{definition}
We now define a model for sets of c-clauses. As usual in superposition completeness proofs, we use rewrite systems as interpretations, where any equation $s\eqs t$ is true in a rewrite interpretation $R$, denoted $R\models s\eqs t$, if $s$ and $t$ have the same normal form in $R$.
As usual, when $s, t$ are ground, $R\models s\not\eqs t \iff R\nmodels s\eqs t$.
A c-clause $\cC{C}{\ELL}$ is true in $R$ if $C$ or $\ELL$ contain at least one literal $L$ such that $R\models L$. 

\begin{definition}[Model for c-clauses]\label{def:superposition-model}
Let $S$ be a set of ground c-clauses. For every c-clause $\CC \in S$, we define in parallel two sets of term rewrite rules $R^{\CC}_S$ and $R^{\prec \CC}_S$ as \emph{partial interpretations} by induction on the relation $\succ$ on ground c-clauses. First, we define:
  \[
    R^{\prec \CC}_S := \bigcup_{\CD \prec \CC, \CD\in S} R^{\CD}_S.
  \]
A ground c-clause $\CC$ of the form $\cC{\underline{l\eqs r} \lor C}{\ELL}$ is called \emph{productive} if
\begin{enumerate}
\item $\ELL$ is computable,\label{productive:comp}
\item $l\eqs r \lor C$ is false in $R^{\prec \CC}_S$,\label{productive:false1}
\item if $l\eqs r \lor C$ is not computable, then all literals in $\ELL$ are false in $R^{\prec \CC}_S$,\label{productive:false2}
\item $l\eqs r$ is strictly maximal in $\CC$,\label{productive:maximal}
\item $l\succ r$,\label{productive:ordered}
\item $C$ is false in $R^{\prec \CC}_S\cup\{l\to r\}$,\label{productive:false3}
\item $l$ is irreducible in $R^{\prec \CC}_S$.\label{productive:irreducible}
\end{enumerate}
In this case, we also say that $\CC$ \emph{produces} the rule $l\to r$. Now we define 

\[
  \begin{array}{rcl}
  R^{\CC}_S & := &
    \left\{
      \begin{array}{ll}
          R^{\prec \CC}_S \cup \{ l \to r \}, & \text{ if $\CC$ produces $l\to r$;} \\
          R^{\prec \CC}_S, & \text{ otherwise.}
      \end{array} 
    \right.
  \end{array}
\]
Finally, we define the \emph{total interpretation $R_S$ for $S$} as $\bigcup_{\CC \in S} R^{\CC}_S$. As usual, an interpretation which satisfies a set of c-clauses $S$ is called a model of $S$.\QED
\end{definition}
%
%
We state two standard properties about the model.
\begin{restatable}{lemma}{Rproperties}\label{lem:Rproperties}
Let $S$ be a set of ground c-clauses.
\begin{enumerate}
    \item $R_S$ is a convergent rewrite system.
    \item $R^{\CC}_S \models \CC$ if and only if for all $\CD \succ \CC$ we have $R^{\CD}_S \models \CC$, if and only if $R_S \models \CC$.
\end{enumerate}
\end{restatable}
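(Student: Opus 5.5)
Both parts follow the standard argument for Bachmair--Ganzinger style rewrite models, adapted to c-clauses; the conditions in the definition of productivity (1, 2, 3, 6) affect only \emph{which} c-clauses produce rules, not the shape of the produced rules. So we only use conditions 4, 5 and 7 for part 1, and the ordering properties on c-clauses for part 2.

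For part 1, termination is immediate. Every rule $l\to r$ in $R_S$ satisfies $l\succ r$ by condition 5. Since $\succ$ is a simplification order (well-founded, closed under contexts and substitutions), $\to_{R_S}\subseteq\succ$, so $R_S$ terminates. For confluence, it suffices, by Newman's lemma, to show local confluence. Because all rules are ground, this reduces to showing there are no critical pairs, i.e. no left-hand side is reducible by another rule. Suppose $\CC$ produces $l\to r$ and $\CD$ produces $l'\to r'$ with $\CD\prec\CC$. Then $l'\to r'\in R^{\prec\CC}_S$, so by condition 7 $l'$ is not a subterm of $l$. In the other direction, $l$ cannot be a proper subterm of $l'$: the strict maximality of $l\eqs r$ in $\CC$ and $\CC\succ\CD$ give $l\succeq l'$, and a proper superterm of $l$ would be strictly greater. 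Nor can $l=l'$, since condition 7 already excludes it when $\CD\prec\CC$. One must check that the c-clause ordering, which compares the clause parts first, still yields $l\succeq l'$. If $C_{\CC}=C_{\CD}$ the maximal literal is the same, so $l=l'$, which is excluded. Otherwise the maximal literal of $\CC$'s clause part dominates that of $\CD$'s, and since positive equations compare via $\{l,r\}$ this gives $l\succeq l'$. This bookkeeping step is the main point to check. Hence left-hand sides are mutually irreducible, there are no overlaps, and $R_S$ is convergent.

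For part 2, we first show the key monotonicity fact: if $\CD\succeq\CC$ and $\CD$ produces $l\to r$, then $l$ is greater than every term occurring in $\CC$. Strictly, this needs $l$ greater than or equal to the terms of $\CC$, and strictly greater than the terms of $\CC$'s negative literals. This follows from strict maximality of $l\eqs r$ in $\CD$ and the fact that a disequality $s\neqs t$ is the bag $\{s,s,t,t\}$, which dominates $\{s,t\}$.

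Using this, we show that for every literal $L$ of $\CC$ (including those in $\ELL$), adding rules with larger left-hand sides does not change the truth value of $L$ once $\CC$'s own contribution is included. Equations are ground, and normal forms of terms smaller than every new $l$ are unaffected by the new rules. For a positive literal equal to $l\eqs r$, which is exactly the produced one in $\CC$ itself, truth is only gained, never lost. Concretely, if $R^{\CC}_S\models\CC$, then the true literal either involves terms below all later left-hand sides, so its truth persists, or $\CC$ produced its own rule $l\to r$, which stays in every larger interpretation. Conversely, if $R^{\CC}_S\not\models\CC$, then no later rule (all with left-hand side $\succ$ every term of $\CC$, where equality with $l$ is excluded by irreducibility) can make a literal true, and negative literals cannot become false.

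This gives the equivalence of $R^{\CC}_S\models\CC$ with $R^{\CD}_S\models\CC$ for all $\CD\succ\CC$. The equivalence with $R_S\models\CC$ follows because any derivation witnessing (non)joinability uses only finitely many rules, all of which are contained in some $R^{\CD}_S$, together with convergence from part 1.
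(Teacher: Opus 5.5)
Your part 1 is the paper's argument: termination from condition 5, then Newman's lemma with non-overlapping ground left-hand sides via condition 7 and maximality. Your check that the c-clause ordering, which compares clause parts first, still yields $l\succeq l'$ is more careful than the paper's.

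Part 2 has a genuine gap in the converse half. You argue that if $R^{\CC}_S\not\models\CC$, no later rule can make a literal true, with ``equality with $l$ excluded by irreducibility''. Irreducibility only excludes a later left-hand side equal to the maximal term of $\CC$ when $\CC$ itself produced a rule. But then $\CC$ is already true in $R^{\CC}_S$, so that situation never arises in the case you are treating. There $\CC$ is not productive, and a larger c-clause may produce a rule whose left-hand side is exactly the maximal term of $\CC$, making a positive literal of $\CC$ true.

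Concretely, take constants $a\succ d\succ b\succ c$ and let $S$ consist of $\cCs{b\eqs c}\prec\cCs{d\eqs b}\prec\CC\prec\CD$, where $\CC=\cCs{a\eqs b\lor a\eqs c}$ and $\CD=\cCs{a\eqs d}$.
\begin{itemize}
\item The first two c-clauses produce $b\to c$ and $d\to b$.
\item $\CC$ is false in $R^{\CC}_S$ and not productive: condition 6 fails, since $a\to b\to c$ makes $a\eqs c$ true.
\item $\CD$ produces $a\to d$, so $a$ and $b$ both reduce to $c$ in $R_S$.
\end{itemize}
Hence $R_S\models\CC$, and $R^{\CD}_S\models\CC$ for the only $\CD\succ\CC$, yet $R^{\CC}_S\not\models\CC$. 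The reverse implications in part 2 are therefore false and cannot be proved.

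The paper's own proof only shows the forward direction: truth in $R^{\CC}_S$ persists to every $R^{\CD}_S$ and to $R_S$. The later minimal-counterexample arguments need that direction, used contrapositively. They also need the fact that the remainder $C$ of a productive $\cC{l\eqs r\lor C}{\ELL}$ stays false in $R_S$. That fact does hold: all terms of $C$ are $\preceq l$, while every later left-hand side is strictly greater than $l$ by your part 1 argument.

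A second problem affects even the forward direction as you state it. You assert the monotonicity fact for every term of $\CC$, ``including those in $\ELL$''. The c-clause ordering looks at conditions only when the clause parts coincide, so a larger c-clause can produce a rule that rewrites a term of the condition.

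For example, take $a\succ c$, let $\CC=\cC{c\neqs c}{\ELL}$ where $\ELL$ is the one-element list $a\neqs c$, and let $\CD=\cCs{a\eqs c}$. Then $\CD\succ\CC$ because $\{a,c\}\succ\{c,c,c,c\}$. $\CC$ is true in $R^{\CC}_S=\emptyset$ through its condition, but $\CD$ produces $a\to c$, and $\CC$ is false in $R_S$.

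So persistence holds only when the literal witnessing truth lies in the clause part. Condition literals need a separate argument; in the paper's application, the partitioned ordering ensures that rules produced after an uncomputable c-clause have uncomputable left-hand sides, which cannot rewrite the computable conditions. The paper's negative-literal step has the same blind spot. You should state and prove part 2 only in this restricted forward form.
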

\noindent
We will construct a model for a set which is \emph{saturated up to redundancy} and \emph{abstracted}, with the following redundancy notions.
\begin{definition}[Redundant answer clause/inference]
\label{def:redundant-clause-inference}
An answer clause $\AC$ is \emph{redundant w.r.t.\ $S$} if every ground instance of $\AC$ follows from smaller ground instances in $S$.
Let $N$ be the set of all ground instances of answer clauses in $S$. An inference $\AC_1,...,\AC_n\vdash \AD$ is \emph{redundant w.r.t.\ $S$} if for each $\theta$ grounding for $\AC_1, \ldots, \AC_n$ and $\AD$ either
\begin{enumerate}
\item $\AD\theta\succ \AC_i\theta$ for some $1\le i\le n$, or
\item $\AD\theta$ follows from the set $\{C \mid C\in N\text{ and }\AC_i\theta\succ C \text{ for some }1\le i\le n\}$.\QED
\end{enumerate}
\end{definition}
\begin{definition}[Saturation up to redundancy, abstracted set]\label{def:saturation}
A set of answer clauses $S$ is \emph{saturated up to redundancy} if, given non-redundant answer 
clauses $\AC_1,...,\AC_n\in S$, any \syncal inference $\AC_1,..., \AC_n \vdash \AD$ is redundant w.r.t.\ $S$.\\
If there is no clause $\AC\in S$ on which $\Abs$ would apply, then we call $S$ \emph{abstracted}.\QED
\end{definition}
In the remainder of this section, we assume $\Lambda=\langle\Sigma_c,\forall \bar{x}.\exists y.F[\bar{x},y]\rangle$ to be a fixed but arbitrary specification, and $S$ a set of answer clauses saturated up to redundancy from $\Lambda$ and abstracted.
We assume $\ansC{\square}{p}\notin S$ for any $p\in\mathcal{P}$.
Note that the set of all ground instances of $S$ might be unsatisfiable even if there is no computable solution for the specification, just an uncomputable one. Therefore, we cannot show refutational completeness of \syncal.
Instead, we show that 
from $S$, we can construct a counter-model for 
any computable program being a solution to $\Lambda$.
Formally, we show that given any ground computable program term $t[\bar{\alpha}]$ (possibly using $\iteF$), the formula $\cnf(\neg F[\bar{\alpha},t])$ is satisfiable. Towards this, we first eliminate $\iteF$ from answer clauses $\ansC{C_1}{t},\dots,\ansC{C_m}{t}$ coming from preprocessing, resulting in c-clauses.

\begin{definition}[$\iteF$ normal form]
\label{def:ite-normal-form}
Let $\AC=\ansC{C[y]}{y}$ be an answer clause and $t$ a program term. The \emph{$\iteF$ normal form} of $\AC$ w.r.t. $t$, denoted $\iteNF(\AC,t)$, is the set of c-clauses defined inductively as follows:
\begin{enumerate}
    \item If $C[y]$ contains $y$ and $t$ is of the form $\iteF(L,s_1,s_2)$, then it is
    $$\{\cC{D}{\lapp{\hat{L}}{\ELL}}\mid \cC{D}{\ELL}\in\iteNF(\AC,s_1)\}\cup\{\cC{D}{\lapp{L}{\ELL}}\mid \cC{D}{\ELL}\in\iteNF(\AC,s_2)\},$$
    \item otherwise, it is $\{\cCs{C[t]}\}$.\QED
\end{enumerate}
\end{definition}
Intuitively, we obtain a c-clause $\cC{C[s]}{\ELL}$ by elimination of $\iteF$ from $\ansC{C}{t}$ when $\ELL$ is the list of negations of $\iteF$-conditions needed to reach the branch term $s$ in $t$.

For any ground computable program term $t$, we construct a set of ground c-clauses $S'$ such that if $S'$ is satisfiable, then $\cnf(\neg F[\bar{\alpha},t])$ is also satisfiable. Then, we show that the total model $R_S'$ satisfies $S'$. The computable part of $S'$ is defined as follows.

\begin{definition}[Computable grounding]\label{def:comp-grounding}
We define the \emph{computable grounding} of $S$, denoted $\CGr(S)$, as the set of c-clauses that contains:
\begin{enumerate}
    \item all computable ground instances of c-clauses in $\iteNF(\AC,t')$ for all $\AC$ in the initial set of $\Lambda$ and program terms $t'$, and
    \item
    all computable ground c-clause instances of answer clauses in $S$.
    \QED
\end{enumerate}
\end{definition}
%
We prove that the model $R_{\CGr(S)}$ satisfies all computable clauses in $\CGr(S)$. The proof is very similar to standard completeness proofs, since the rules of \syncal for computable clauses correspond to the standard superposition rules: we do not apply any extra restrictions to inferences between computable clauses, and any inference between them results in a computable clause. By induction on $\succ$ over c-clauses, we obtain the following. 
\begin{restatable}{lemma}{computableSatisfied}\label{lemma:computable-clauses-completeness}
If $\ansC{\square}{p}\notin S$ for any computable $p\in\mathcal{P}$, then $R_{\CGr(S)}\models \CC$ for all $\CC\in \CGr(S)$.
\end{restatable}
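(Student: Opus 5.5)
We argue by well-founded induction on the ordering $\succ$ over ground c-clauses, in the style of the Bachmair--Ganzinger completeness proof. We assume, for contradiction, that some $\CC\in\CGr(S)$ is false in $R_{\CGr(S)}$, and we take $\CC$ to be the $\succ$-minimal such c-clause. Every smaller c-clause of $\CGr(S)$ is therefore true in $R_{\CGr(S)}$. By \Cref{lem:Rproperties}(2), it is also true in the partial interpretations for all c-clauses at or above it.

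\emph{Step 1: reduce to instances of $S$.} First we show that it suffices to consider $\CC$ that are computable ground instances $\AC\theta$ of non-redundant answer clauses $\AC\in S$, whose answers $p\theta$ are ground and computable. Elements of type (1) in \Cref{def:comp-grounding}, namely $\iteNF$-instances of initial clauses, have the shape $\cC{C[s]}{\ELL}$ where $C[y]$ comes from an initial clause $\ansC{C}{y}\in S$. Since $\ELL$ is false, $C[s]$ is false. But $\cCs{C[s]}$ is itself a computable ground instance of $\ansC{C}{y}$ with answer $s$, and it is $\preceq\CC$. Hence we may replace $\CC$ by $\cCs{C[s]}$, which is false and no larger, and by minimality nothing changes. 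If the instance is redundant, it follows from smaller ground instances in $S$. These are computable, because the order is partitioned, so anything smaller than a computable ground clause is computable. They are therefore in $\CGr(S)$ and true by minimality, a contradiction. Also $\CC$ is not empty, since $\ansC{\square}{p}\notin S$.

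\emph{Step 2: the standard case analysis.} Let $\CC=\cCs{C\theta}$ with answer $p\theta$, and select its maximal literals. All literals are computable, so the answers produced by inferences with $\sigma\theta$-compatible premises remain computable: the condition $l\eqs r$ is computable, $\iteF$ of computable terms is computable, and $p\sigma$ is computable whenever $p\theta$ is. This means conditions (4) of the rules never block the lifted inference. We then run the usual cases. \textbf{(a)} If the maximal literal is a negative $s\neqs t$ with $s=t$, the $\EqRes$ inference lifts to $S$ with $\mgu$. Its conclusion is computable and smaller, so by saturation and redundancy it is true, which makes $\CC$ true, a contradiction. \textbf{(b)} If $s\neqs t$ with $s\succ t$ and the normal forms agree, then $s$ is reducible by a rule $l\to r$ produced by a smaller c-clause $\CD=l\eqs r\lor D$. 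That c-clause is computable, because it is smaller than a computable clause and the order is partitioned. We then lift a $\Sup$ inference. \textbf{(c)} If the maximal literal is positive and $\CC$ is not productive, either some conditional literal fails or the literal is not strictly maximal (handled by $\EqFac$), or $s$ is reducible (handled by $\Sup$ into a positive literal).

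\emph{Main obstacle: lifting the superposition cases in (b) and (c).} The two premise instances generally have different answers $q\theta$ and $p\theta$. To handle this I would use $\Sup_C$ on the lifted premises. Its conclusion has answer $\iteF(l\eqs r,q,p)\sigma$, whose grounding is computable because $l\theta,r\theta$ are computable. The subtle point is that this conclusion, instantiated by $\theta$, is an answer clause with an $\iteF$ answer. Its ground instance must still be related to a c-clause that is smaller than $\CC$ and lies in $\CGr(S)$. I would argue directly from the semantics of $\iteF$. In the interpretation, $l\theta\eqs r\theta$ is true, since $R$ contains the rule $l\theta\to r\theta$. So the clause part $s[r]\deqs t\lor C\lor D$, instantiated by $\theta$, is smaller than $\CC$ and follows from the saturation hypothesis: the inference is redundant, so its conclusion follows from ground instances in $S$ smaller than $\CC$, which are computable and hence true. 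Combined with $D\theta$ false (productivity of $\CD$), this makes $C\theta$ true, the desired contradiction. If necessary, $\Sup_U$ can be used instead when the answers unify. The bookkeeping that the $\iteF$ answers do not affect truth of the clause part is the step I expect to require the most care.
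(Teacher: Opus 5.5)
Your proposal is correct and follows essentially the same route as the paper: take a $\succ$-minimal false c-clause in $\CGr(S)$, strip a nonempty condition $\ELL$ to obtain the smaller false $\cCs{C'}$, and then run the standard $\EqRes$/$\Sup_C$/$\EqFac$ lifting, where the computability side conditions hold automatically because everything involved is computable. Two remarks: the $\iteF$ answer you flag as the main obstacle is harmless, since c-clause instances in $\CGr(S)$ simply forget the answer (the paper only notes that the ground instance of the $\Sup_C$ conclusion lies in $\CGr(S)$); and the step you leave inside ``the usual cases'' is made explicit in the paper as its own case: if some binding $x\theta$ is reducible in $R_{\CGr(S)}$, replace it by its normal form to get a smaller false instance, which is what guarantees that the rewritten subterm sits at a non-variable position, as $\Sup_C$ requires.
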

%
\noindent
To obtain $S'$, we extend the set $\CGr(S)$ with uncomputable ground clauses. This set also depends on a computable program term $t$, and is called a grounding of $S$ w.r.t. $t$.
\begin{definition}[Grounding w.r.t. a program term]\label{def:uncomp-grounding}
Let $t\in\mathcal{P}$ be ground and $t'$ be the normal form of $t$ w.r.t. $R_{\CGr(S)}$. We define the \emph{grounding of $S$ w.r.t. program term $t$}, denoted $\Gr(S,t)$. In parallel, we define the \emph{derivation length} for each uncomputable $\CC$ in $\Gr(S,t)$, denoted $\DL(\CC)$.\\
We define $\Gr(S,t)$ as the minimal set of c-clauses containing $\CGr(S)$ such that:
\begin{enumerate}[itemsep=5pt]
    \item\label{def:uncomp-grounding:base} For any answer clause $\AC$ in the initial set of $\Lambda$, the set $\Gr(S,t)$ contains all ground instances of c-clauses in $\iteNF(\AC,t')$.\\ For any $\CC$ added to $\Gr(S,t)$ by this step, we define $\DL(\CC)= 0$.
    \item\label{def:uncomp-grounding:step} For any inference $\ansC{C_1}{p_1},\ldots,\ansC{C_n}{p_n}\vdash \ansC{C}{p}\sigma$ such that $\ansC{C_i}{p_i}\in S$ for all $1\le i\le n$, and $\ansC{C}{p}\in S$, if there is a substitution $\theta$ such that $\sigma\theta=\theta$, $\Gr(S,t)$ contains uncomputable c-clauses $\cC{C_1\theta}{\ELL_1}, \ldots, \cC{C_n\theta}{\ELL_n}$, and there is $j$, $1\leq j\leq n$, such that for all $1\le i\le n$, either $\ELL_i=\emptylist$, or $\ELL_i=\ELL_j$, then $\cC{C\theta}{\ELL_j}$ is in $\Gr(S,t)$.\\
    Take the inference and $\theta$ such that $\max(\DL(\cC{C_1\theta}{\ELL_1}), \ldots, \DL(\cC{C_n\theta}{\ELL_n}))$ is minimal. We define $\DL(\cC{C\theta}{\ELL_j})=\max(\DL(\cC{C_1\theta}{\ELL_1}), \ldots, \DL(\cC{C_n\theta}{\ELL_n}))+1$.\QED
\end{enumerate}
\end{definition}
In the remainder of this section, we assume that $t$ is an arbitrary but fixed ground computable program term. One key step in our main result is to show that inferences between uncomputable answer clauses with non-unifiable answers are not needed for a refutation that yields a computable answer. Towards this, we prove the following lemma about the grounding of $S$ w.r.t. $t$, using induction on the derivation length of c-clauses. 
\begin{restatable}{lemma}{uncomputableConditions}
\label{lemma:uncomputable-conditions} There is a set of lists of computable ground literals $\Delta$ that satisfies the following properties:
\begin{enumerate}
    \item\label{lemma:uncomputable-conditions:1} Any uncomputable $\CC\in\Gr(S,t)$ is of the form $\cC{C}{\ELL}$ where $\ELL\in\Delta\cup\{\emptylist\}$.
    \item\label{lemma:uncomputable-conditions:2} For any $\ELL\in\Delta\cup\{\emptylist\}$, there is a ground computable simple term $t_\ELL$ such that for any uncomputable $\cC{C'}{\ELL}\in\Gr(S,t)$, substitution $\theta$, and clause $\ansC{C}{p}\in S$ such that $C\theta=C'$, it holds that $p$ is a simple term and $p\theta=t_\ELL$ or $p$ is a variable not in $C$.
    \item\label{lemma:uncomputable-conditions:2'} If $\emptylist\notin\Delta$, then for any uncomputable $\cCs{C'}\in\Gr(S,t)$, substitution $\theta$, and clause $\ansC{C}{p}\in S$ such that $C\theta=C'$, it holds that $p$ is a variable not in $C$.
    \item\label{lemma:uncomputable-conditions:3} For any $\ELL,\KAY\in\Delta$ such that $\ELL\neq\KAY$, $\ELL$ and $\KAY$ contain a complementary literal.
\end{enumerate}
\end{restatable}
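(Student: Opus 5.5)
The plan is to fix $\Delta$ as the set of branch conditions of $t'$, the normal form of $t$ w.r.t. $R_{\CGr(S)}$. That is, $\Delta$ consists of the lists $\ELL$ such that some c-clause $\cC{D}{\ELL}$ with $\ELL\neq\emptylist$ occurs in $\iteNF(\AC,t')$ for an initial answer clause $\AC$ containing $y$. Unrolling $t'$ as in \Cref{def:ite-normal-form} pairs each such list with a unique branch term $s_\ELL$ of $t'$, which is a ground computable simple term; we set $t_\ELL:=s_\ELL$. Property~\ref{lemma:uncomputable-conditions:3} then holds by construction. Two distinct branches of the $\iteF$-tree of $t'$ diverge at some node with condition $L$, so one list contains $L$ and the other contains $\hat L$. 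All literals in these lists come from $\iteF$-conditions of the computable term $t'$, so they are computable and ground. If $t'$ is a simple term, we take $\Delta=\{\emptylist\}$ and $t_{\emptylist}:=t'$. Otherwise $\emptylist\notin\Delta$, and in that case $t_{\emptylist}$ may be chosen arbitrarily.

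Properties~\ref{lemma:uncomputable-conditions:1}, \ref{lemma:uncomputable-conditions:2} and \ref{lemma:uncomputable-conditions:2'} are proved together by induction on $\DL(\CC)$ over uncomputable $\CC\in\Gr(S,t)$. The clauses of $\CGr(S)$ are computable, so they fall outside the scope of the claim. In the base case $\DL(\CC)=0$, $\CC$ is a ground instance of a c-clause in $\iteNF(\AC,t')$ with $\AC=\ansC{C[y]}{y}$. If $C$ contains $y$ and $t'$ is an $\iteF$, the condition lies in $\Delta$ and the clause is $C[t_\ELL]$. If $C$ does not contain $y$, the condition is $\emptylist$. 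Now consider any $\ansC{D}{p}\in S$ and $\theta$ with $D\theta=C'$. I would argue that a clause of $S$ whose answer is not a variable absent from $D$ must be derived through inferences that linked $p$ to the specification variable $y$, and that every derived answer is either $y$ itself or comes from $\Sup_U/\EqRes/\EqFac$ instances of $y$. A $\Sup_C$ step introduces an $\iteF$ only when its conditional literal is computable, and such steps cannot involve uncomputable ground instances with matching conditions in $\Gr(S,t)$. Matching $D\theta=C[t_\ELL]$ then forces $p\theta=t_\ELL$. Making this rigorous is delicate, because $C'$ may be matched by several clauses of $S$. I would strengthen the induction hypothesis to quantify over all such matches and use the abstractedness of $S$: the computable subterm $t_\ELL$ sits in an uncomputable literal, so $\Abs$ has replaced it by a variable $x$ together with $x\neqs t_\ELL$. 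This pins the answer down through computable literals only.

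For the inductive step, $\CC=\cC{C\theta}{\ELL_j}$ is obtained by a \syncal inference whose uncomputable premises $\cC{C_i\theta}{\ELL_i}$ all have smaller derivation length, with each $\ELL_i$ equal to $\emptylist$ or $\ELL_j$. Property~\ref{lemma:uncomputable-conditions:1} is immediate from the induction hypothesis. For property~\ref{lemma:uncomputable-conditions:2}, go through the rules one at a time. For $\EqRes$, $\EqFac$ and $\Sup_U$, the conclusion's answer is $p_1\sigma$ (unified with $p_2\sigma$ in the case of $\Sup_U$). Since $\sigma\theta=\theta$, the induction hypothesis gives $p_i\theta=t_{\ELL_i}$ or that $p_i$ is a fresh variable. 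When both premises carry nonempty conditions, these are the same $\ELL_j$, so the answers agree. When one premise has condition $\emptylist$, its answer is either a fresh variable, which absorbs the other answer, or equal to $t_{\emptylist}$, and the latter is only possible when $\emptylist\in\Delta$, which forces $\Delta=\{\emptylist\}$. For $\Sup_C$, condition (4) makes the literal $l\eqs r$ computable. Because the conclusion is uncomputable, some premise literal is uncomputable, and since uncomputable literals are larger by partitioning, $l\eqs r$ cannot be selected. Hence $\Sup_C$ contributes nothing here, apart from the case where both premises are computable, which is excluded. Property~\ref{lemma:uncomputable-conditions:2'} follows from the same case analysis restricted to empty conditions.

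I expect the main obstacle to be transferring the invariant from the particular ground instance to \emph{every} clause $\ansC{C}{p}\in S$ and every $\theta$ with $C\theta=C'$, because $\Gr(S,t)$ is defined via instances and not via the clauses themselves. My first attempt at this will combine abstractedness (\Cref{lemma:abstraction-conditions}) with the requirement that answers of uncomputable clauses are simple terms, via condition (4) of $\Sup_U$.
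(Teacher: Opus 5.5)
Your skeleton matches the paper's. You take $\Delta$ as the set of branch conditions of $t'$; the paper obtains the same set by structural induction in \Cref{lem:complementaryConditions}, and you are right that it must be built for $t'$, not $t$. Property~\ref{lemma:uncomputable-conditions:3} comes from branches diverging at a common $\iteF$ node. Properties~\ref{lemma:uncomputable-conditions:1}, \ref{lemma:uncomputable-conditions:2} and~\ref{lemma:uncomputable-conditions:2'} go by induction on $\DL$, with $\EqRes$, $\EqFac$ and $\Sup_U$ handled as in the paper via $\sigma\theta=\theta$.

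The gap is in the base case. There you try to control \emph{every} $\ansC{D}{p}\in S$ with $D\theta=C'$, and you leave this open. The repair you sketch cannot work. $t_\ELL$ enters a derivation-length-$0$ c-clause only through the instantiation $y\mapsto t_\ELL$ in $\iteNF$, so it is not a subterm of any clause of $S$. $\Abs$ never abstracts variables and never touches answers, so it says nothing about $t_\ELL$.

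Worse, the all-matches invariant is false in general. Take $f$ uncomputable, $g,c,d,e$ computable, and initial clauses $\ansC{f(u)\eqs e}{y}$ and $\ansC{f(u)\eqs e\lor g(y)\neqs g(d)}{y}$.
\begin{itemize}
\item Both clauses are abstracted.
\item $g(y)\neqs g(d)$ is incomparable with $f(u)\eqs e$, hence selected.
\item $\EqRes$ therefore yields $\ansC{f(u)\eqs e}{d}$, which a saturated $S$ may contain.
\end{itemize}
For $t=c$ we get $\Delta=\{\emptylist\}$, and the second initial clause forces $t_\emptylist=c$. Yet $\cCs{f(u_0)\eqs e}\in\Gr(S,t)$ is matched by a clause of $S$ with answer $d\neq c$.

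The paper does not attempt this. Its base case only appeals to \Cref{lem:complementaryConditions}, i.e.\ to the initial clause $\ansC{C[y]}{y}$ whose $\iteF$ normal form produced $\CC$. For that clause, $y\theta=t_\ELL$ (or $y\notin C$) is immediate from the branch structure. Its step only concerns the conclusion of the generating inference. That is also all your own inductive step propagates, since you apply the hypothesis only to the premises of that inference. So your strengthening would not survive the step even if the base case held.

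Your exclusion of $\Sup_C$ is also invalid. Partitioning compares ground terms, but selection is performed on non-ground premises. There a computable literal can be incomparable with an uncomputable one and therefore be selected. Take $f,h$ uncomputable and $g,b,d,e$ computable, and apply $\Sup_C$ from $\ansC{\sel{g(z)\eqs d}\lor f(x)\eqs b}{y}$ into $\ansC{h(v)\eqs e\lor\sel{v\neqs g(w)}}{y'}$.
\begin{itemize}
\item All four side conditions hold; condition~(4) only asks that $\iteF(g(w)\eqs d,y',y)$ be computable.
\item Both premises can be initial clauses with uncomputable instances of derivation length $0$.
\item The conclusion $\ansC{d\neqs v\lor f(x)\eqs b\lor h(v)\eqs e}{\iteF(g(w)\eqs d,y',y)}$, if it is in $S$, puts an uncomputable c-clause with condition $\emptylist$ into $\Gr(S,t)$.
\item The generating answer of that c-clause is neither simple nor a variable.
\end{itemize}
So this case cannot be closed by an ordering argument. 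It has to be dealt with in how $\Gr(S,t)$ or property~\ref{lemma:uncomputable-conditions:2} is formulated. The paper's inductive step only discusses $\Sup_U$, $\EqRes$ and $\EqFac$, so it offers no argument here for you to borrow.
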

%
\noindent
Intuitively, any list of literals $\ELL$ in $\Delta$ corresponds to the negation of $\iteF$-conditions needed to reach some branch term, i.e. a simple term $s$ in $t$. The lemma states that:
\begin{enumerate}[label=(\roman*)]
    \item Any c-clause $\CC$ in $\Gr(S,t)$ represents an instance of an answer clause $\ansC{C}{p}$ in $S$ with either the empty condition, meaning that $p$ is any program term not depending on $C$, or with some conditions $\ELL$ in $\Delta$ such that $s$ is an instance of $p$.
    \item Any two lists of conditions in $\Delta$ contain a pair of complementary literals.
\end{enumerate}
This ensures that inferences between uncomputable c-clauses corresponding to different branches from $t$ are not needed, because if one of them is productive (\Cref{def:superposition-model}), the other one is necessarily satisfied by the model.
Now we can prove that $R_{\Gr(S,t)}$, obtained by extending $R_{\CGr(S)}$ with uncomputable clauses, is a model of $\Gr(S,t)$, and thus also of $\neg F[\bar{\alpha},t]$.
\begin{restatable}{lemma}{groundingSatisfied}\label{lemma:grounding-satisfied}
If $\ansC{\square}{p[\bar{\alpha}]}\notin S$ for any computable program term $p[\bar{\alpha}]$, then the formula $\neg F[\bar{\alpha},t]$ is satisfiable for any ground computable program term $t$.
\end{restatable}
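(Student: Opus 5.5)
The plan is to show that the total interpretation $R_{\Gr(S,t)}$ is a model of $\Gr(S,t)$, and then to transfer this to $\neg F[\bar{\alpha},t]$. For the transfer step, observe that $\Gr(S,t)$ contains all ground instances of the c-clauses in $\iteNF(\AC,t')$ for every $\AC$ in the initial set, where $t'$ is the $R_{\CGr(S)}$-normal form of $t$. Unfolding \Cref{def:ite-normal-form}, a model of these c-clauses satisfies, on each branch of $t'$ selected by the truth values of its conditions, the clauses $C_i[s]$, with $s$ the branch term reached. Hence it satisfies $\cnf(\neg F[\bar{\alpha},t'])$. Since $R_{\CGr(S)}\subseteq R_{\Gr(S,t)}$ and the rewrite system is convergent by \Cref{lem:Rproperties}, $t$ and $t'$ have the same normal form, so $\neg F[\bar{\alpha},t]$ holds as well.

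To prove $R_{\Gr(S,t)}\models\Gr(S,t)$, suppose otherwise and take the $\succ$-minimal false c-clause $\CC=\cC{C}{\ELL}$. By \Cref{lemma:computable-clauses-completeness}, together with the observation that uncomputable clauses are larger than computable ones (partitioned ordering) and therefore produce no rules affecting computable terms, $\CC$ must be uncomputable. By \Cref{lemma:uncomputable-conditions}, $\ELL\in\Delta\cup\{\emptylist\}$ and $\ELL$ is computable. Since $\CC$ is false, all literals of $\ELL$ are false. We then perform the standard case analysis on the maximal literal $L$ of $C$.

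\begin{enumerate}
\item \textbf{$L$ negative, $s\neqs t$.} If $s=t$, an $\EqRes$ inference applies. Otherwise we reduce the maximal side using the productive clause $\CD$ that produced the rule, which gives a $\Sup$ inference.
\item \textbf{$L$ positive but not produced.} Either an $\EqFac$ inference applies, or $l$ is reducible, which again gives a $\Sup$ inference.
\end{enumerate}

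In each case the conclusion, instantiated by $\theta$, is a smaller c-clause, added to $\Gr(S,t)$ by step~\ref{def:uncomp-grounding:step} of \Cref{def:uncomp-grounding} with condition $\ELL$, and it is false. This contradicts minimality, unless the conclusion is the empty clause, which would give $\ansC{\square}{p}\in S$ with computable $p$, also a contradiction.

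The main obstacle is checking that these inferences are actually \syncal inferences, i.e.\ that the answers unify or remain computable, and that the condition lists are compatible as step~\ref{def:uncomp-grounding:step} requires. I would argue this with \Cref{lemma:uncomputable-conditions}. Suppose the productive $\CD=\cC{D}{\KAY}$ is uncomputable. By productivity condition~\ref{productive:false2}, $\KAY$ is false. If $\KAY\neq\ELL$ and both are nonempty, they contain complementary literals by property~\ref{lemma:uncomputable-conditions:3}, so one of the two lists would be true, which is impossible. Hence $\KAY=\ELL$ or one of them is empty. By property~\ref{lemma:uncomputable-conditions:2}, both answers instantiate to the same $t_\ELL$, or are variables not in the clause. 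Therefore $\Sup_U$ applies with $\sigma$ more general than $\theta$, and $p\sigma$ is a computable simple term.

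If instead $\CD$ is computable, then $\Sup_C$ applies: the $\iteF$-condition is computable, and the new literal is absorbed into the condition list. Here abstractedness is needed. Since $\Abs$ no longer applies, computable subterms of uncomputable literals are variables or involve only computable parts, so the only superpositions needed into uncomputable literals come from uncomputable productive clauses. This is the point where I expect the technical bookkeeping, via the argument of \Cref{lemma:abstraction-conditions}, to be heaviest. Redundancy of inferences is handled as usual: a redundant conclusion follows from smaller clauses, which are all true by minimality, and so the conclusion is true as well.
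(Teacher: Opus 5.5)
\textbf{Gap: the case where a computable rule reduces the maximal literal.} Most of your structure matches the paper's proof. You take a $\succ$-minimal false c-clause $\CC=\cC{C'}{\ELL}$, which must be uncomputable. You handle $\EqRes$ and $\EqFac$ by getting computability of $p\sigma$ from \Cref{lemma:uncomputable-conditions}. You apply $\Sup_U$ against an uncomputable productive clause, using complementary conditions and the common $t_\ELL$. One minor point: if $\ELL=\emptylist\neq\KAY$, step~\ref{def:uncomp-grounding:step} gives the conclusion the condition $\KAY$, not $\ELL$; this is harmless since $\KAY$ is false. Your transfer from $t'$ back to $t$ is also fine.

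The gap is the one case specific to this setting. Here $s'$, the larger side of the maximal literal of $\CC$, is reducible by a rule $l'\to r'$ produced by a \emph{computable} c-clause. Your $\Sup_C$ route fails for three reasons:
\begin{itemize}
\item Step~\ref{def:uncomp-grounding:step} of \Cref{def:uncomp-grounding} closes $\Gr(S,t)$ only under inferences whose premises are all uncomputable members of $\Gr(S,t)$. It never extends condition lists, and an extended list would not lie in $\Delta\cup\{\emptylist\}$.
\item The conclusion's answer $\iteF(l\eqs r,q,p)\sigma$ is not a simple term. So it would violate property~\ref{lemma:uncomputable-conditions:2} of \Cref{lemma:uncomputable-conditions}.
\item It would also block later $\Sup_U$ steps, which need syntactic unification of answers. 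This is essentially the stuck derivation of \Cref{ex:abs}, the example that motivated $\Abs$.
\end{itemize}
Your other remark, that after abstraction only uncomputable productive clauses matter, is the right conclusion. However, you leave it unproved.

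\textbf{What is missing.} The paper shows this case cannot occur. The argument needs a case your analysis omits entirely:
\begin{itemize}
\item \emph{The omitted case.} Suppose some variable $x$ of the clause $\ansC{C}{p}\in S$ with $C\theta=C'$ has $x\theta$ reducible. Replacing $x\theta$ by its normal form gives a smaller false c-clause in $\Gr(S,t)$. This requires $x\notin p$. That holds because $p\theta=t_\ELL$ is a branch of the normalised $t'$ and is therefore irreducible; this is what building $\Gr(S,t)$ from $t'$ provides.
\item \emph{Locating $l'$.} With that case excluded, $l'=l\theta$ for a non-variable subterm $l$ of $s$. Since $\CC$ is uncomputable and the literal is maximal, the partitioned ordering forces $s'$ to be uncomputable. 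Hence $l'$ is a proper computable subterm of $s'$.
\item \emph{Contradiction.} Conditions (1)--(4) of $\Abs$ then hold for a maximal suitable superterm $k$ of $l$. This contradicts the assumption that $S$ is abstracted.
\end{itemize}

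The same variable case is also needed for the side condition ``$l'$ is not a variable'' in your $\Sup_U$ step. It is also the only case that covers reductions inside an abstracted clause $s[x]\deqs t\lor x\neqs k\lor C$, where the computable term sits below the fresh variable $x$.
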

%
\noindent
We conclude this section with the proof of our main result.
\begin{proof}[of~\Cref{theorem:completeness}]
Since $\Lambda$ is realizable, there is a program term $t[\bar{x}]$ with no variables except for $\bar{x}$, such that $\forall\bar{x}.F[\bar{x},t[\bar{x}]]$ is valid.
By contraposition of~\Cref{lemma:grounding-satisfied} it then follows that $\ansC{\square}{p[\bar{\alpha}]}\in S$ for an abstracted set $S$ saturated up to redundancy from $\Lambda$.
Since \syncal and $\Abs$ are sound (\Cref{th:soundness}), $p[\bar{x}]$ is a solution to $\Lambda$.
\QED
\end{proof}



\section{Related Work} \label{section:related_work}
\paragraph{Saturation-Based Synthesis.} A  saturation-based solution to the synthesis of recursion-free programs is introduced in~\cite{SynSat}. We show that~\cite{SynSat} is not complete. 
We introduce the \syncal calculus to simplify the reasoning of~\cite{SynSat}, as  (i) we assume all clauses to be answer clauses, and (ii) use most general unifiers instead of the arbitrary computable unifiers of ~\cite{SynSat} in inferences. Further, (iii) SUPRA implements a new abstraction rule $\Abs$, needed for completeness under realizability assumptions.

\paragraph{Deductive Approaches to Synthesis.}
Our work, as well as~\cite{SynSat,RecSynSat}, are based on the deductive synthesis approach of~\cite{MannaWaldinger1980},  adapted for resolution~\cite{LWC1974}. The mechanism in~\cite{10.1007/BFb0022265} restricts allowed programs, and subsumes our computability restrictions.

The work of~\cite{SynSat} extends~\cite{RecSynSat} with induction to support synthesis of recursive programs, with a proof of soundness and implementation, but without any completeness guarantees.
Our notion of realizability does not cover recursive programs, as their construction requires a mechanism for defining new functions not included in the computable symbols we work with.

A practical approach for synthesis in the presence of uncomputable symbols uses quantifier elimination and SMT solving~\cite{Z3Syn}, and requires $F[\bar{x}, y]$ to be quantifier-free in~\eqref{eq:spec} but may use linear arithmetic. The approach is complete only when $F[\bar{x}, y]$ uses theories that admit quantifier elimination, which excludes e.g. equalities or uninterpreted functions.

The SyGuS~\cite{sygus} format allows specifying programs in a fragment of first-order logic extended by theories and optionally a grammar for the sought programs. Some prominent SyGuS solvers are cvc5~\cite{SMTSyn} and DryadSynth~\cite{dryadsynth}, as evidenced by the SyGuS competition~\cite{sygus-comp}.
Component-based synthesis methods construct programs from logical specifications using a predefined set of functions.
The approach of~\cite{10.1007/3-540-58156-1_24} is deductive, while~\cite{GulwaniEtAl2011,TiwariEtAl2015} use SMT solving.
The GAPT framework~\cite{gapt} synthesizes programs from proofs, by computing witnesses of second-order formulas with quantifier elimination~\cite{SCAN}, or by extracting programs from natural deduction proofs in classical logic~\cite{friedman}.



\paragraph{Related Completeness Results.}
Our completeness proof is inspired by standard completeness proofs for superposition~\cite{10.1007/3-540-55253-7_22}. Yet, our model construction only works for groundings that substitute computable program terms into the specification due to realizability assumptions.
This relates our work to~\cite{10.1007/BFb0022265}. 
Finally, \cite{10.1007/BFb0023785} proved that a general synthesis system for recursive programs cannot exist.
This does not contradict our findings, as we claim completeness \emph{under the assumption that a program exists}.

\paragraph{Restricted Inferences.}
A rule similar to $\Abs$ is used in~\cite{hierachichtheories,10.1007/3-540-55253-7_22} to disallow superpositions into certain terms, while~\cite{DelayedUnification} introduces unification constraints to the clause level that are similar to abstracted terms. In~\cite{UWA-THI}, the unification algorithm is extended by abstraction to enhance theory reasoning within superposition, abstracting terms on demand to enable inferences while postponing expensive theory reasoning. 

\section{Summary and Outlook}
\label{section:summary}
We  
tackle program synthesis  and introduce the \syncal calculus for saturation-based proof search. Our calculus revises \cite{SynSat} 
by using an abstract unification rule, which unravels computable subterms from uncomputable literals such that uncomputable literals can be resolved without disobeying the computability constraint of the calculus. We also  use specific simplification orders and selection functions in \syncal, ensuring that uncomputable literals are always selected if there are any in a clause. 
\syncal changes the notion of clauses with answer literals to constrained clauses to make reasoning simpler. 
Not only do these adaptions retain soundness of \syncal, but they enable proving completeness of \syncal under the assumption that synthesis specifications have a solution. 
A natural direction for future work is investigating completeness under realizability assumption for synthesis of recursive functions using calculi that make use of induction.

\subsubsection{Acknowledgements.}
This research was funded in whole or in part by the ERC Consolidator Grant ARTIST 101002685, the Austrian Science Fund (FWF) 10.55776/DOC1345324, 
and  the  SBA Research  COMET Center SBA-K1 NGC managed  by the FFG.

\subsubsection{Disclosure of Interests.}
The authors have no competing interests to declare that are relevant to the content of this article.
 \bibliographystyle{splncs04}
 \bibliography{bibliography}

\ifbool{shortversion}{
 }{

\appendix

\newpage
\section*{Appendix}

\section{Proofs from \Cref{section:calculus}}\label{appendix:calculus}

\Soundness*
\begin{proof}\label{proof:soundness}
We prove soundness for the rules $\EqRes$ and $\Sup_C$.
For the other rules the proofs are analogous.

By $\cl(G)$ we denote the universal closure of the formula $G$.
When $v$ is a variable assignment, by $v(x)$ we denote the value it assigns to the value $x$. When $I$ is an interpretation, by $I(t)$ we denote the value by which it interprets the term $t$.

   The semantics of answer clauses were defined as follows: an interpretation $I$ satisfies an answer clause $\ansC{C}{p}$ w.r.t. a given specification $\Lambda$ if $I \models \cl(C\lor F[\bar{\alpha},p])$ for fresh constants $\bar{\alpha}$.
   For any rule of \syncal, 
\begin{prooftree}
\AxiomC{$\ansC{C_1}{p_1} \quad \dots \quad \ansC{C_n}{p_n}$} 
\RightLabel{,}
\UnaryInfC{$\ansC{C}{p}\sigma$}
\end{prooftree}
we have to show that if (i) $I \models \cl(C_1\lor F[\bar{\alpha},p_1]),\dots, I\models \cl(C_n\lor F[\bar{\alpha},p_n])$, then (ii) $I\models \cl(C\sigma\lor F[\bar{\alpha},p\sigma])$.
We prove this by contradiction: assume (i) holds but (ii) does not.
Then there exists a variable assignment $v$ such that $I$ extended by $v$, denoted $I_v$, does not satisfy $C\sigma\lor F[\bar{\alpha},p\sigma]$.
Therefore, $I_v\not\models C\sigma$ and $I_v\not\models F[\bar{\alpha},p\sigma]$.

Recall the rule $\EqRes$:
\begin{center}
\begin{tabular}{c c l}
\multirow{2}{*}{
\AxiomC{$\ansC{\sel{s\neqs t}\lor C}{p}$}
\LeftLabel{($\EqRes$)}
\UnaryInfC{$\ansC{C}{p}\sigma$}
\DisplayProof
}
& \multirow{2}{*}{where} & (1) $\sigma=\mgu(s,t)$\\
& & (2) $p\sigma$ is computable\\
\end{tabular}
\end{center}
For $\EqRes$, we assume (i) $I_u\models s\neqs t\lor C\lor F[\bar{\alpha},p]$ for any variable assignment $u$ and (ii) $I_v\not\models C\sigma$ and $I_v\not\models F[\bar{\alpha},p\sigma]$.
We construct a variable assignment $v'$ such that $v'(x)=I_v(x\sigma)$ for each variable $x$ occurring in $\ansC{s\neqs t\lor C}{p}$, and $v'(z)=v(z)$ for all other variables $z$.
Since $\sigma$ unifies $s$ and $t$, we get $I_{v'}\not\models s\neqs t$.
Further, from $I_v\not\models C\sigma$ we get $I_{v'}\not\models C$.
Hence, from (i) follows that $I_{v'}\models F[\bar{\alpha},p]$.
Now, in~\Cref{def:semantics} we assumed that $F[\bar{\alpha}, y]$ and $\ansC{C}{p}$ do not have any common variables. Therefore $I_{v'}(p) = I_v(p\sigma)$, and thus from (ii) and the definition of $v'$ we get $I_{v'}\not\models F[\bar{\alpha}, p]$.
We derived a contradiction, $\EqRes$ is therefore sound.

Recall now the rule $\Sup_C$:
\begin{center}
\begin{tabular}{c c l}
\multirow{4}{*}{
\AxiomC{$\sel{l\eqs r}\lor \ansC{C}{p}$}
\AxiomC{$\ansC{\sel{s[l']\deqs t}\lor D}{q}$}
\LeftLabel{($\Sup_C$)}
\BinaryInfC{$\ansC{s[r]\deqs t\lor C \lor D}{\iteF(l\eqs r,q,p)}\sigma$}
\DisplayProof
}
& \multirow{4}{*}{where} & (1) $\sigma=\mgu(l,l')$\\
& & (2) $l'$ is not a variable\\
& & (3) $r\sigma \nsucceq l\sigma$, $t\sigma  \nsucceq s[l']\sigma$ \\
& & (4) $\mathsf{ite}(l\eqs r,q,p)\sigma$ is computable \\
\end{tabular}
\end{center}
For $\Sup_C$, we assume (i) $I_u\models l\eqs r \lor C\lor F[\bar{\alpha},p]$, $I_u\models s[l']\deqs t \lor D\lor F[\bar{\alpha},q]$ for any variable assignment $u$ and (ii) $I_v\not\models (s[r]\deqs t\lor C \lor D)\sigma$ and $I_v\not\models F[\bar{\alpha},\iteF(l\eqs r,q,p)\sigma]$ for some variable assignment $v$.
We construct a variable assignment $v'$ such that $v'(x)=I_v(x\sigma)$ for each variable $x$ occurring in 
$l\eqs r\lor \ansC{C}{p}$ and $\ansC{s[l']\deqs t\lor D}{q}$, and $v'(z)=v(z)$ for all other variables $z$.
Then from $I_v\not\models (s[r]\deqs t\lor C \lor D)\sigma$ we get $I_{v'}\not\models s[r]\deqs t\lor C \lor D$. 
We have two cases, either $I_{v'}\models l \eqs r$ or $I_{v'}\not\models l \eqs r$. Assume the first, then from $I_{v'}\not\models s[r]\deqs t$, also $I_{v'}\not \models s[l']\deqs t$ (the evaluation of $l$ and $l'$ under $v'$ is the same since $\sigma$ is a unifier of $l$ and $l'$), hence (i) implies $I_{v'}\models F[\bar{\alpha},q]$. We evaluate $I_{v'}(\iteF(l\eqs r,q,p)=I_{v'}(q) = I_v(q\sigma)$, and note that we assumed that answer clauses $\ansC{C}{p}$ do not have common variables with $F[\bar{\alpha},y]$, thus from (ii) we get $I_{v'}\not\models F[\bar{\alpha}, q]$, contradiction.

Similarly, the second case, $I_{v'}\not\models l \eqs r$, implies $I_{v'}\models F[\bar{\alpha},p]$. From $I_{v'}(\iteF(l\eqs r,q,p)=I_{v'}(p) = I_v(p\sigma)$ and (ii) we derive the contradiction $I_{v'}\not\models F[\bar{\alpha}, p]$.

Hence, $\Sup_C$ is sound.
\CommentedOut{
\PH{I think the following does not work, see my todo-notes.}

   The semantics of answer clauses were defined as follows: an interpretation $I$ and variable assignment $v$ (denoted by $I_v$) satisfy an answer clause $\ansC{C}{p}$ w.r.t. a given specification $\Lambda$ if either (i) $I_v \models C$ or (ii) $I_v \models F[\bar{\alpha},p]$. 
   For any rule of \syncal, 
\begin{prooftree}
\AxiomC{$\ansC{C_1}{p_1} \quad \dots \quad \ansC{C_n}{p_n}$} 
\RightLabel{,}
\UnaryInfC{$\ansC{C}{p}\sigma$}
\end{prooftree}
if for some interpretation $I$ and variable assignment $v$ all premises fulfill (i) then also $I_v \models C\sigma$ by the soundness of the Superposition Calculus.\todo{PH: this does not hold. The soundness of the sup. calculus says that if for all $v$ we have $I_v$ satisfies the premises, then for all $v'$ we have $I_{v'}\models C\sigma$. It doesn't work for just one $v$.} In particular, the rules $\EqFac$, $\EqRes$, and $\Sup_C$ have additional conditions about the computability of answers, and $\Sup_U$ has a more restrictive unifier, which in both cases do not affect soundness. 

What remains to show is that if one of the answer clauses $\ansC{C_i}{p_i}$ satisfies (ii) for an interpretation $I$ and variable assignment $v$ then it follows that the derived answer clause $\ansC{C}{p}$ is satisfied by $I_v$ w.r.t. $\Lambda$. 

For $\EqRes$ and $\EqFac$ this immediately follows, $p\sigma$ is an instance of $p$.\todo{PH: I think this isn't true because the variables were assigned some values by $v$, so from $I_v\models p$ does not follow $I_v\models p\sigma$...} Similar arguing works for $\Sup_U$ since $\sigma$ unifies both answers $p$ and $q$.

For the inference rule $\Sup_C$ we have to consider three cases.

Assume first that both the left clause, $l\eqs r\lor \ansC{C}{p}$, and the right clause $\ansC{s[l']\deqs t\lor D}{q}$ do not satisfy (i) but (ii). Then $I_v\models F[\bar{\alpha},\iteF(l\eqs r,q,p)\sigma]$.

Consider further that the left clause, $l\eqs r\lor \ansC{C}{p}$ does not satisfy (i) but (ii) and the right clause $\ansC{s[l']\deqs t\lor D}{q}$ satisfies (i). If $I_v \models D$ we are done, so assume $I_v \models s[l']\deqs t$. Then either $I_v \nmodels (l\eqs r)\sigma$ in which case $I_v\models F[\bar{\alpha},\iteF(l\eqs r,q,p)\sigma]$ or $I_v\models (l\eqs r)\sigma$ then also $I_v\models (s[r]\deqs t)\sigma$.

Lastly we consider the case where the left clause, $l\eqs r\lor \ansC{C}{p}$ satisfies (i) and the right clause $\ansC{s[l']\deqs t\lor D}{q}$ does not satisfy (i) but (ii). If $I_v \models C$ also $I_v \models C\sigma$, so we can assume that $I_v \models l\eqs r$. Then we know $I_v\models F[\bar{\alpha},\iteF(l\eqs r,q,p)\sigma]$.
}
\QED
\end{proof}

\Cor*
\begin{proof}\label{proof:corollary}
We first show that all answer clauses in the initial set $S_0$ from $\Lambda$ hold (\Cref{def:semantics}).
The initial set $S_0$ is $\ansC{\cnf(\neg F[\bar{\alpha}, y])}{y}$.
Each answer clause in $S_0$ is thus of the form $\ansC{C}{y}$, where $C\in\cnf(\neg F[\bar{\alpha}, y])$.

We will show that $C\lor F[\bar{\alpha}, y]$ is true in any interpretation $I$ under any variable assignment $v$.
The formula $\neg F[\bar{\alpha}, y]$ is either true or false in $I$ under $v$.
If $\neg F[\bar{\alpha}, y]$ is true in $I$ under $v$, and $C\in\cnf(\neg F[\bar{\alpha}, y])$, also $C$ is true in $I$ under $v$. Thus, $C\lor F[\bar{\alpha},y]$ is true too.
On the other hand, if $\neg F[\bar{\alpha}, y]$ is false in $I$ under $v$, it means that its negation $F[\bar{\alpha}, y]$ is true in $I$ under $v$.
Then $C\lor F[\bar{\alpha},y]$ is true too.
It follows that $C\lor F[\bar{\alpha},y]$ is true in any interpretation and any variable assignment, therefore it holds, and thus also $\ansC{C}{y}$ holds.

We have shown that we start saturation with a set of valid answer clauses.
Since each inference by \syncal is sound by~\Cref{th:soundness}, all answer clauses derived in saturation are also valid.
Then, if $\ansC{\square}{p[\bar{\alpha}]}$ is derived, it is valid, meaning that $\square\lor F[\bar{\alpha},p[\bar{\alpha}]]$ holds.
Further, $\square\lor F[\bar{\alpha},p[\bar{\alpha}]]$ is equivalent to $F[\bar{\alpha},p[\bar{\alpha}]]$, which thus also holds.
Finally, since $\bar{\alpha}$ were fresh Skolem constants, it follows that $\forall\bar{x}.F[\bar{x},p[\bar{x}]]$ is valid too, meaning that $p[\bar{x}]$ is a solution for $\Lambda$.\QED
\end{proof}
   
\absSound*
\begin{proof}\label{proof:absSound}
   The soundness of $\Abs$ is given as follows. Let $I$ be an interpretation that satisfies $\cl(\ansC{s[k]\deqs t\lor C}{p})$. If $I$ satisfies $\cl(s[k]\deqs t)$ but not $\cl(x\neqs k)$ it satisfies $\cl(s[x]\deqs t)$. Conversely, if it does not satisfy $\cl(s[x]\deqs t)$ it satisfies $\cl(x\neqs k)$. If $I$ satisfies $\cl(C)$ or $\cl(F[\bar{\alpha},p])$ the conclusion trivially holds.  \QED
\end{proof}

\absConditions*
\begin{proof}
Suppose condition (2) holds.
From $k\theta$ being computable follows that $k$ is computable too and thus (2a) holds.
Further, $s[k]\theta$ is uncomputable.
Either $s[k]$ is uncomputable too, or the uncomputable symbol was introduced as a part of some term $u$ by $\{y\mapsto u\}\subseteq \theta$ for some $y$.
In the latter case, since $k\theta$ is computable, we know that $y$ must occur in $s[k]$ but not in $k$. Thus (2b) holds too. 

Suppose now conditions (2a) and (2b) hold.
If $s[k]$ is uncomputable, the empty substitution satisfies condition (2).
Otherwise $s[k]$ contains some variable $y$ not occurring in $k$. Then $\theta = \{y\mapsto u\}$ for an arbitrary uncomputable term $u$ satisfies condition (2).\QED
\end{proof}

\section{Orderings}
\label{appendix:ordering}

\subsection{Orderings and selection} \label{subsection:orderingselection}
The following lemmas state that for these types of LPOs any uncomputable term, literal or clause cannot be smaller than or equal to any computable term, literal or clause respectively. 


\begin{restatable}{lemma}{lpoLemma}\label{lem:lpo}
For a signature $\Sigma=\Sigma_c \uplus \Sigma_u$, an LPO with a precedence relation $\gg$, such that every $k \in \Sigma_u$ has higher precedence than any $k' \in \Sigma_c$, and terms $s \in \mathcal{T}_c,\;\; t \in \mathcal{T}_u$ it holds that $s\nsucceq_{lpo}t$.
\end{restatable}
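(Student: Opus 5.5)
The claim is that $s\nsucceq_{\lpo}t$, that is, neither $s\succ_{\lpo}t$ nor $s=t$. The equality case is immediate: $s=t$ is impossible because $t$ contains a symbol of $\Sigma_u$ and $s$ does not. The substantive part is showing $s\not\succ_{\lpo}t$. I will prove this by well-founded induction on the pair $(s,t)$, ordered by the sum of term sizes. Concretely, I assume that for all computable $s'$ and uncomputable $t'$ with $|s'|+|t'|<|s|+|t|$ we already have $s'\nsucceq_{\lpo}t'$. I then suppose $s\succ_{\lpo}t$ and derive a contradiction by case analysis on which of the four clauses of the LPO definition justifies $s\succ_{\lpo}t$.

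Since $t$ is uncomputable, it contains some symbol of $\Sigma_u$, so $t$ is not a variable; this rules out clause~\ref{prop:lpo1}. For the remaining clauses, $s$ must be of the form $f(s_1,\dots,s_n)$ with $f\in\Sigma_c$, because $s$ is computable and $s=g(\dots)$ is needed for clauses~\ref{prop:lpo2}--\ref{prop:lpo4} to apply at all. Write $t=g(t_1,\dots,t_m)$.

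In clause~\ref{prop:lpo2} we have $s_i\succeq_{\lpo}t$ for some $i$. Here $s_i$ is computable and $|s_i|+|t|$ is smaller than $|s|+|t|$, so the induction hypothesis gives $s_i\nsucceq_{\lpo}t$, a contradiction.

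In clause~\ref{prop:lpo3} we have $g=f$, so $g$ is computable. Since $t$ is uncomputable, some argument $t_j$ must be uncomputable. The clause requires $s\succ_{\lpo}t_j$, and $|s|+|t_j|<|s|+|t|$, so the induction hypothesis again yields a contradiction.

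In clause~\ref{prop:lpo4} we have $f\gg g$. As $f\in\Sigma_c$, the assumption on the precedence forces $g\in\Sigma_c$, because every uncomputable symbol is above every computable one. So $g$ is computable, and the same argument as in clause~\ref{prop:lpo3} applies: some $t_j$ is uncomputable, yet $s\succ_{\lpo}t_j$ is required, which is impossible by the induction hypothesis.

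The only delicate point is setting up the induction measure so that every recursive call shrinks it. Clause~\ref{prop:lpo2} shrinks the left term, while clauses~\ref{prop:lpo3} and~\ref{prop:lpo4} shrink the right term, so the sum of sizes decreases in every case. I also need to check the base case where $s$ is a variable or a constant. A variable $s$ is never strictly greater than anything, and a constant $s$ falls under the same case analysis with $n=0$, where clause~\ref{prop:lpo2} is vacuous. If lemmas for literals and clauses are wanted afterwards, they follow from this term lemma via the bag extension.
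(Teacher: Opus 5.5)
Your proof is correct and follows essentially the same route as the paper's. You rule out the variable clause, handle the subterm clause by descending into $s$, and eliminate the remaining two clauses either through the precedence (when the head of $t$ is uncomputable) or through an uncomputable argument $t_j$ with $s\succ_{\lpo}t_j$. The paper uses structural induction on $t$ plus a minimal subterm $s'$ of $s$ with $s'\succ_{\lpo}t$, whereas your single measure $|s|+|t|$ combines both descents into one induction, which is slightly cleaner bookkeeping of the same argument.
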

\begin{proof}    
     Note that $s \neq t$, because $t$ contains an uncomputable symbol while $s$ does not. We still need to prove that $s \nsucc_{\lpo} t$. We prove this by induction over the term $t$.  Term $t$ cannot be a variable, so case \ref{prop:lpo1} in the definition of LPO is trivially never fulfilled. 
     \paragraph{Base case.} The topmost symbol of term $t$ is uncomputable; $\topsym(t) \in \Sigma_u$\footnote{By $\topsym(t)$ we denote the topmost symbol of term $t$.}. Assume by contradiction that $s \succ_{\lpo} t$. Let $s'$ be smallest subterm of $s$ such that $s' \succ_{\lpo} t$. By definition $\topsym(t) \gg \topsym(s')$, neither case \ref{prop:lpo2} nor \ref{prop:lpo3} can hold, contradiction.
\paragraph{Induction step.}
     Let $t=f(t_1,\dots t_{j-1},t_j,t_{j+1}\dots,t_n)$ where $\topsym(t_j) \in \Sigma_u$. Assume by contradiction that $s \succ_{\lpo} t$. Let $s'$ be smallest subterm of $s$ such that $s' \succ_{\lpo} t$. Because $s'\nsucc_{\lpo} t_j$ neither case \ref{prop:lpo3} nor \ref{prop:lpo4} can hold, contradiction.\QED
\end{proof}

The next Lemma immediately highlights that we don't lose the property from Lemma~\ref{lem:lpo} using bag extensions.
\begin{restatable}{lemma}{bagorder}\label{lem:bagorder}
Assume that for two bags $\{x_1,\dots,x_n\}$ and $\{y_1,\dots,y_m\}$ there is $j \in \{1,\dots,m\}$ such that for any $i \in \{1,\dots,n\}$ it holds that $x_i \nsucceq y_j$. Then $\{x_1,\dots,x_n\}\nsucceq^{bag}\{y_1,\dots,y_m\}$.    
\end{restatable}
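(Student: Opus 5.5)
The plan is to find a property of pairs of bags that every single step of $\succbag$ has and that survives transitive chaining. Since $\succbag$ is the \emph{smallest} transitive relation containing the basic steps, it will then be contained in that property. The hypothesis of the lemma says exactly that this property fails for $\{x_1,\dots,x_n\}$ and $\{y_1,\dots,y_m\}$, and equality of bags is excluded separately.

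Concretely, for bags $A,B$ I say that $A$ \emph{dominates} $B$ if for every $b\in B$ there is some $a\in A$ with $a\succeq b$. I would prove three facts.

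\textbf{(i) Each basic step is a domination.} Take a step $\{x,y_1,\ldots,y_k\}\succbag\{x_1',\ldots,x_l',y_1,\ldots,y_k\}$ with $x\succ x_i'$ for all $i$. Each $x_i'$ is dominated by $x$, and each $y_i$ is dominated by itself.

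\textbf{(ii) Domination is transitive.} Suppose $A$ dominates $B$ and $B$ dominates $C$, and let $c\in C$. Pick $b\in B$ with $b\succeq c$, and then $a\in A$ with $a\succeq b$. Then $a\succeq c$, because $\succeq$ is transitive ($\succ$ is a strict order, so its reflexive closure is transitive).

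\textbf{(iii) Containment.} Domination is a transitive relation containing all basic steps. By minimality of $\succbag$, every pair in $\succbag$ is a domination.

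Now suppose $\{x_1,\dots,x_n\}\succeq^{\mathsf{bag}}\{y_1,\dots,y_m\}$. If the two bags are equal, then $y_j$ equals some $x_i$, so $x_i\succeq y_j$. If $\{x_1,\dots,x_n\}\succbag\{y_1,\dots,y_m\}$, then by (iii) some $x_i$ satisfies $x_i\succeq y_j$. Either case contradicts the assumption on $y_j$.

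The only point needing care is step (iii). It relies on $\succbag$ being defined as the least transitive relation generated by the steps, which is how the preliminaries define it. Everything else is routine.
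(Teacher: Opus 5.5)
Your proof is correct, and it is organized differently from the paper's.

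The paper first shrinks the right-hand bag. It uses generating steps with $m=0$, which simply delete an element, to get $\{y_1,\dots,y_m\}\succeq^{\mathsf{bag}}\{y_j\}$. By transitivity this gives $\{x_1,\dots,x_n\}\succeq^{\mathsf{bag}}\{y_j\}$. It then asserts without argument that this ``can only be'' if some $x_i\succeq y_j$. That assertion carries all the content: $\succ^{\mathsf{bag}}$ is a transitive closure, so a chain from $\{x_1,\dots,x_n\}$ down to $\{y_j\}$ may pass through arbitrary intermediate bags. It would follow, for example, from the Dershowitz--Manna characterization of the multiset order, but the paper does not supply this.

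Your domination invariant proves the needed fact directly and in general. It shows that whenever the left bag is $\succeq^{\mathsf{bag}}$ the right one, every element of the right bag is $\preceq$ some element of the left bag. This makes the reduction to a singleton unnecessary. It also uses only transitivity of $\succ$ and the minimality clause in the definition of $\succ^{\mathsf{bag}}$, with no appeal to well-foundedness or to an external characterization of the multiset ordering.

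The paper's route is shorter on the page. Yours is self-contained and supplies exactly the justification the paper's final step leaves out.
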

\begin{proof}
    Assume by contradiction that $\{x_1,\dots,x_n\}\succeq^{bag} \{y_1,\dots,y_m\}$. By definition of the bag extension $\{y_1,\dots,y_m\}\succeq^{bag}\{y_j\}$. By transitivity this implies $\{x_1,\dots, x_n\}\succeq^{bag} \{y_j\}$. But this can only be if there is $i \in \{1,\dots,n\}$ such that $x_i \succeq y_j$. By assumption this cannot be the case. \QED  
\end{proof}

The next two Lemmas follow very easily from Lemma~\ref{lem:bagorder}.
\begin{restatable}{lemma}{bagordertwo}\label{lem:bagorder2}
  Let $L_1,L_2$ be two literals, where $L_1$ is uncomputable and $L_2$ is computable. By $\succ$ we denote the bag extension of an LPO with a precedence relation $\gg$ such that every $k \in \Sigma_u$ has higher precedence than any $k' \in \Sigma_c$. Then $L_2 \nsucceq L_1$, i.e. for an uncomputable clause $C$ at least one uncomputable literal in $C$ is maximal.  
\end{restatable}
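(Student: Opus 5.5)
The plan is to apply \Cref{lem:bagorder} directly to the bag representations of the two literals. An equality $s\eqs t$ corresponds to the bag $\{s,t\}$ and a disequality $s\neqs t$ to $\{s,s,t,t\}$. Write $L_1$ (uncomputable) as $u_1\deqs u_2$ and $L_2$ (computable) as $c_1\deqs c_2$.

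First, every term in the bag of $L_2$ is computable, since $L_2$ contains no uncomputable symbol. Second, because $L_1$ is uncomputable, at least one of $u_1,u_2$ is uncomputable, say $u_1$, and $u_1$ occurs in the bag of $L_1$. Third, \Cref{lem:lpo} gives $c_i\nsucceq u_1$ for each element $c_i$ of the bag of $L_2$, because each $c_i$ is computable and $u_1$ is uncomputable. This is exactly the hypothesis of \Cref{lem:bagorder}, with the bag of $L_2$ in the role of $\{x_1,\dots,x_n\}$ and $u_1$ as the distinguished element $y_j$. \Cref{lem:bagorder} then yields that the bag of $L_2$ is not $\succeq^{\mathsf{bag}}$ the bag of $L_1$, that is, $L_2\nsucceq L_1$. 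The polarity of the two literals plays no role, since duplicated elements do not affect the hypothesis.

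For the second claim, let $C$ be an uncomputable clause. At least one of its literals is uncomputable, and since $C$ is finite and the literal ordering is a strict partial order, the uncomputable literals of $C$ contain a maximal one among themselves, call it $L$. Suppose some literal $K$ of $C$ satisfies $K\succ L$. If $K$ is computable, this contradicts the first part. If $K$ is uncomputable, this contradicts the choice of $L$. Hence $L$ is maximal in $C$.

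The only point that needs care is that \Cref{lem:lpo} is stated for terms, and possibly non-ground ones. It applies elementwise here with no extra assumption, because \Cref{lem:bagorder} needs only $x_i\nsucceq y_j$ for each element, not totality of the order. Beyond that, everything is routine.
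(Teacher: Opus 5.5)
Your proposal is correct and follows the paper's proof: \Cref{lem:lpo} shows that no computable term is $\succeq$ an uncomputable one, and then \Cref{lem:bagorder} is applied to the bags of $L_2$ and $L_1$. Beyond what the paper writes, you spell out the elementwise application of \Cref{lem:lpo} and give the argument for the ``i.e.'' clause about a maximal uncomputable literal, which the paper leaves implicit.
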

\begin{proof}
    We have already shown that uncomputable terms cannot be smaller  than computable terms. Therefore by Lemma~\ref{lem:bagorder} $L_2 \nsucceq L_1$. \QED
\end{proof}

\begin{restatable}{lemma}{bagorderthree}\label{lem:bagorder3}
  Let $C_1,C_2$ be two clauses, where $C_1$ is uncomputable and $C_2$ is computable. By $\succ$ we denote the bag extension of literals of the bag extension of an LPO with a precedence relation $\gg$ such that every $k \in \Sigma_u$ has higher precedence than any $k' \in \Sigma_c$. Then $C_2 \nsucceq C_1$.  
\end{restatable}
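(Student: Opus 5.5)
The plan is to reduce the statement to Lemma~\ref{lem:bagorder}, applied once at each bag level. A clause is the bag of its literals, and the clause ordering is the bag extension of the literal ordering. So it suffices to exhibit one literal $L$ of $C_1$ such that $K \nsucceq L$ for every literal $K$ of $C_2$. Lemma~\ref{lem:bagorder}, instantiated with literals as elements, then gives $C_2 \nsucceq C_1$ directly.

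The natural choice for $L$ is any uncomputable literal of $C_1$. Such a literal exists because $C_1$ is uncomputable, meaning some literal of it contains an uncomputable symbol. Every literal $K$ of $C_2$ is computable, since $C_2$ is computable. Lemma~\ref{lem:bagorder2} then yields $K \nsucceq L$, which is exactly the hypothesis Lemma~\ref{lem:bagorder} requires.

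The one step I would check carefully is Lemma~\ref{lem:bagorder2} itself, because it carries the real content. A literal is encoded as the bag $\{s,t\}$ or $\{s,s,t,t\}$. An uncomputable literal $L$ has at least one uncomputable term $u$ in its bag. Every term $w$ in the bag of a computable literal $K$ is computable, so Lemma~\ref{lem:lpo} gives $w \nsucceq u$. Lemma~\ref{lem:bagorder} at the term level then gives $K \nsucceq L$. This argument is insensitive to the duplication in disequality bags, since only a single witness element $u$ is ever needed. Once this is in place, the clause-level argument above is a two-line application of the same lemma, and no case split on the polarity of literals is needed.
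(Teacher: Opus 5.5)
Your proposal is correct and matches the paper's argument: an uncomputable literal of $C_1$ serves as the witness, Lemma~\ref{lem:bagorder2} shows that no computable literal of $C_2$ is $\succeq$ it, and the bag-extension property of Lemma~\ref{lem:bagorder} lifts this to $C_2 \nsucceq C_1$. You spell out explicitly the clause-level use of Lemma~\ref{lem:bagorder}, and the term-level reasoning behind Lemma~\ref{lem:bagorder2}, which the paper's two-line proof leaves implicit.
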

\begin{proof}
    We have already shown that uncomputable literals cannot be smaller  than computable literals. Therefore by Lemma~\ref{lem:bagorder2} $C_2 \nsucceq C_1$. \QED
\end{proof}

\subsection{Knuth-Bendix orders}

In the following we give an extended definition for \emph{Knuth-Bendix orders} that can also be used with \syncal. A \emph{transfinite weight function} is a function $w$ from $\Sigma = \Sigma_c \uplus \Sigma_u$ to the union of the ordinals $\omega \cup \omega_1$, where $\omega:=\{0,1,2,\dots\}$ and $\omega_1=\{\omega + 1, \omega +2, \dots\}$ such that  $w(\Sigma_c) \subseteq \omega$ and $w(\Sigma_u) \subseteq \omega_1$\footnote{we denote with $f(A)$ the image of a function, where $f:X \to Y$ and $A \subseteq X$.}. We will refer to $w(f)$ as the \emph{weight} of $f$. We denote by $w_0$ the smallest weight of constants.
%
%
For $p\in\Sigma\cup\mathcal{V}$, we write $|t|_p$ to denote  the \emph{number of occurrences} of $p$ in 
term $t$. For example, $|f(x,x)|_f=1$, $|f(x,x)|_x=2$ and $|f(x,x)|_y=0$. Let $\mathcal{P}(\mathcal{V})$ be the set of linear expressions over $\mathcal{V}$ with integer coefficients. The \emph{weight of a term} $t$, denoted by $|t|$, is a linear expression in $\mathcal{P}(\mathcal{V})$ defined as:
$$|t|:=\sum_{f\in\Sigma}|t|_f\cdot w(f) + \sum_{x\in\mathcal{V}}|t|_x\cdot x,$$
where arithmetic (i.e. $+,\cdot, <$) on ordinal numbers is defined in the standard way.
A substitution $\sigma$ can also be considered as mapping from linear expressions to linear expressions, as follows: 
\[
\sigma(\alpha_0 + \alpha_1\cdot x_1 + \ldots + \alpha_n\cdot x_n):=\alpha_0 + \alpha_1\cdot |x_1\sigma| + \ldots + \alpha_n\cdot |x_n\sigma|.
\]
For example, if $w(f)=2$, $w(a)=1$ and $\sigma=\{x\mapsto a\}$, then $|f(x,x)|=2\cdot x+2$ and $\sigma(|f(x,x)|)=\sigma(2\cdot x+2)=2\cdot|x\sigma|+2=4$.
It is not hard to argue that $|t\sigma|$ = $\sigma(|t|)$. 
Let $e\in\mathcal{P}(\mathcal{V})$ be a linear expression. We call a substitution $\sigma$ \emph{grounding} for $e$ if $\sigma(e)$ does not contain variables. 
We write $e_1 > e_2$ if $\sigma(e_1) > \sigma(e_2)$  for all grounding substitutions $\sigma$ for $e_1$ and $e_2$. We write $e_1 \gtrsim e_2$ if $\sigma(e_1) \geq \sigma(e_2)$ for all grounding substitutions $\sigma$ for $e_1$ and $e_2$.
The \emph{transfinite Knuth-Bendix order (tKBO)}\cite{tfkbo}\footnote{Note that this definition is only a specific instance of the family of transfinite KBOs given in \cite{tfkbo} but similar properties hold.}, denoted by $\succ_\tkbo$, is parameterized by a precedence relation $\gg$ and a transfinite weight function $w$. For terms $s,t$, we have  $s \succ_\tkbo t$ if:
\begin{enumerate}
\item\label{prop:kbo1} $|s|>|t|$, or
\item\label{prop:kbo2} $|s|\gtrsim|t|$, $s = f(s_1,...,s_n)$, $t = g(t_1,...,t_m)$ and $f\gg g$, or
\item\label{prop:kbo3} $|s|\gtrsim|t|$, $s = f(s_1,...,s_n)$, $t = f(t_1,...,t_n)$ and there exists $1\le i\le n$ such that $s_i\succ_\tkbo t_i$ and
$s_j=t_j$ for all $1\le j<i$.
\end{enumerate}

The transfinite KBO is a simplification order for any precedence relation $\gg$ and transfinite weight function $w$, if $w_0>0$ and $f\gg g$ for all $g\in\Sigma$ different from $f$, for all unary $f\in\Sigma$ with $w(f)=0$ \cite{tfkbo}. \\
\begin{restatable}{lemma}{kboLemma}\label{lem:kbo}
For a signature $\Sigma=\Sigma_c \uplus \Sigma_u$, a transfinite KBO on $\Sigma$ and terms $s \in \mathcal{T}_c,\;\; t \in \mathcal{T}_u$ it holds that $s\nsucceq_{tkbo}t$.
\end{restatable}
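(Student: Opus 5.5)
We argue by contradiction, comparing weights; the key point is that every uncomputable term has a weight involving the transfinite part $\omega_1$, while every computable term has weight below $\omega$. Since $t\in\mathcal{T}_u$ contains a symbol $f\in\Sigma_u$ and $s\in\mathcal{T}_c$ does not, we have $s\neq t$, so it suffices to show $s\nsucc_\tkbo t$.

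First we compare weights under groundings. Fix any grounding substitution $\sigma$ for $|s|$ and $|t|$, and ground by computable terms, e.g. a computable constant (assuming one exists, as in the paper's setting). Then $\sigma(|s|)$ is a finite sum of finite weights, hence an element of $\omega$. On the other hand, $\sigma(|t|)\geq w(f)\geq\omega+1$, because $|t|_f\geq 1$ and ordinal addition is monotone in each summand. Hence $\sigma(|s|)<\sigma(|t|)$ for this grounding. It follows that neither $|s|>|t|$ nor $|s|\gtrsim|t|$ holds, since both are required for \emph{all} groundings.

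Next we check the cases of the tKBO definition. Case~\ref{prop:kbo1} needs $|s|>|t|$, and cases~\ref{prop:kbo2} and~\ref{prop:kbo3} both need $|s|\gtrsim|t|$. All three fail by the previous step, so $s\nsucc_\tkbo t$. Together with $s\neq t$ this gives $s\nsucceq_\tkbo t$.

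The main obstacle is the ordinal arithmetic. Ordinal addition is not commutative, so the sum $\sum_f |t|_f\cdot w(f)$ depends on the order of the summands, and we must still ensure the total is at least $w(f)\geq\omega+1$. We plan to handle this with the standard facts that $\alpha+\beta\geq\beta$ and $\alpha+\beta\geq\alpha$ for ordinals. If the summation order causes trouble, we will fix a canonical order in which transfinite summands come last, or interpret the sum as a natural (Hessenberg) sum, for which monotonicity is immediate. A smaller point is the existence of a computable ground term for $\sigma$. If none exists, we can instead pick any grounding: $s$ contains only computable symbols, so $\sigma(|s|)$ is a finite combination of the weights $|x\sigma|$. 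Choosing all $x\sigma$ equal to one fixed ground term $u$, the uncomputable weight of $t$ still dominates, provided variables of $t$ also occur appropriately; if this fails we fall back to the computable-constant argument above.
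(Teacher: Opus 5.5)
Your proof is correct and is essentially the paper's argument: ground all variables by computable terms, so that $\sigma(|s|)$ is finite while $\sigma(|t|)\geq\omega+1$, which rules out $|s|\gtrsim|t|$ and hence every case of the tKBO definition. Your bound $\sigma(|t|)\geq\omega+1$, justified via ordinal monotonicity, is more accurate than the paper's claim $\sigma(|t|)\in\omega_1$, since a sum of several transfinite weights need not lie in $\{\omega+1,\omega+2,\dots\}$. However, your closing fallback for signatures without computable ground terms is circular and should be dropped. In that situation the statement can genuinely fail for non-ground $s$; for example, $s=g(x,x)$ with $g$ computable satisfies $|s|>|t|$ against an uncomputable constant $t$. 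So the existence of a computable ground term, which the paper also tacitly assumes, is a real hypothesis.
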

\begin{proof}
    The term $t$ contains an uncomputable symbol, while $s$ does not, therefore $s \neq t$. Assume further that $s \succ_{\tkbo} t$. This implies that $|s|\gtrsim|t|$. Therefore for any grounding substitution $\sigma$ it holds that $\sigma(|s|)\geq \sigma(|t|)$. In particular $\sigma(|s|)\geq \sigma(|t|)$ for any substitution $\sigma$ that maps variables to only computable terms. This cannot be the case because by assumption $\sigma(|s|)\in \omega_0$ and $\sigma(|t|) \in \omega_1$. \QED
\end{proof}
\begin{remark}
    In particular computable expressions (literals or clauses) can never be bigger than uncomputable expressions using transfinite KBOs, see Lemma~\ref{lem:bagorder},\ref{lem:bagorder2},\ref{lem:bagorder3}.
\end{remark}

\section{Proofs from \Cref{section:completeness}}\label{appendix:completeness}
\Rproperties*
\begin{proof}

Towards the first property, first, we prove that the rewrite relation of $R_S$ is terminating. $R_S$ terminates since $l \succ r$ for all its rules $l \to r$ by condition \ref{productive:ordered} of Definition~\ref{def:superposition-model}.

Second, we prove that the rewrite relation of $R_S$ is confluent. Newman's lemma states that every terminating and locally confluent term rewriting system is confluent~\cite{NewmansLemma}. Using Newman's lemma, it suffices to show local confluence. Since all rules in $R_S$ are ground, it is enough to show that the rules are non-overlapping, i.e. that there are no two different rules $l \to r$ and $l' \to r'$ in $R_S$ where $l'$ is a subterm of $l$. Let $\CC$ be a c-clause that produces a rule $l \to r$. By condition \ref{productive:irreducible} of Definition~\ref{def:superposition-model}, $R_S^\CC$ cannot contain any rule $l' \to r'$ s.t. $l'$ is a subterm of $l$. Suppose that some $l'\to r'$ is produced by a c-clause $\CD\succ\CC$ s.t. $l'$ is a subterm of $l$. Since $l'$ is maximal in $\CD$ (condition \ref{productive:maximal} and \ref{productive:ordered} of Definition~\ref{def:superposition-model}), we have $l'\succ l$, hence $l'$ cannot be a subterm of $l$ (using the subterm property of $\succ$).

Towards the second property, suppose that $R_S^\CC\models \CC$ due to a positive literal $s\eqs t$ in $\CC$ with $R_S^\CC\models s\eqs t$. For any clause $\CD\succ \CC$, we have $R_S\supseteq R_S^\CD\supseteq R_S^\CC$. This implies that $R_S^\CD\models s\eqs t$ and $R_S\models s\eqs t$, hence $R_S^\CD\models \CC$ and $R_S\models \CC$. Otherwise, there is a negative literal $s\neqs t$ in $\CC$ and $R_S^\CC\nmodels s\eqs t$. W.l.o.g. assume $s\succ t$. For any clause $\CD\succ \CC$ that produces a rewrite rule $l\to r$, $l$ cannot be a subterm of $s$, hence no rule in $R_S^\CD$ (or $R_S$) can rewrite $s$. Since $R_S$ is convergent (by property 1), this means that $R_S\nmodels s\eqs t$.\QED
\end{proof}

\computableSatisfied*

\begin{proof}
The proof is by contradiction and by induction on $\succ$ over c-clauses. Suppose there is a c-clause $\CC\in \CGr(S)$ such that (i) $R_{\CGr(S)}\nmodels \CC$ and (ii) $\CC$ is the minimal such c-clause in $\CGr(S)$ w.r.t. $\succ$. We consider the following cases. \footnote{Note that in all cases we assume that all preceding cases do not hold, to make our case distinction simpler. Similarly in~\Cref{lemma:grounding-satisfied}.}

\begin{proofcase}{1}{$\CC$ is of the form $\cC{C'}{\ELL}$ where $\ELL$ is not empty.}\end{proofcase}

\noindent
Let $\CD$ be the c-clause $\cCs{C'}$, which is also in $\CGr(S)$. Moreover, $\CC\succ \CD$ and since $C'$ is false, $R_{\CGr(S)}\nmodels\CD$. Hence we have found a smaller false c-clause in $\CGr(S)$, contradiction.

\begin{proofcase}{2}{$\CC$ is of the form $\cCs{C'}$ and there is an answer clause $\ansC{C}{p}$ in $S$ and a substitution $\theta$ s.t. $C\theta=C'$, and there is a variable $x$ in $C$ such that $x\theta$ is reducible in $R_{\CGr(S)}$.}\end{proofcase}

\noindent
Let $\theta'$ be the substitution $\theta$, except that it maps $x$ to the normal form of $x\theta$ w.r.t. $R_{\CGr(S)}$, and let $\CD$ be the c-clause $\cCs{C\theta'}$. We have $\CD\in \CGr(S)$ and since $x\theta\succ x\theta'$ and $C$ contains $x$, we have $\CC\succ \CD$. Also, $x\theta'$ is the normal form of $x\theta$, so $R_{\CGr(S)}\nmodels \CD$ either. Thus, we have found a false clause smaller than $\CC$ in $\CGr(S)$, contradiction.

\begin{proofcase}{3}{$\CC$ is of the form $\cCs{\underline{s'\neqs t'}\lor C'}$ where $s'\neqs t'$ is maximal in $s'\neqs t'\lor C'$.}\end{proofcase}

\begin{proofsubcase}{3.1}{$s'=t'$.}\end{proofsubcase}

\noindent
Then there is an answer clause $\ansC{\underline{s\neqs t}\lor C}{p}$ in $S$, a ground substitution $\theta$ such that $(\underline{s\neqs t}\lor C)\theta=\underline{s'\neqs t'}\lor C'$. Consider the following inference:
\begin{prooftree}
    \AxiomC{$\ansC{\underline{s\neqs t}\lor C}{p}$}
    \LeftLabel{($\EqRes$)}
    \UnaryInfC{$\ansC{C}{p}\sigma$}
\end{prooftree}
where $\sigma=\mgu(s,t)$. Let $\CD$ be the c-clause $\cCs{C'}$. By $s\theta=t\theta=s'$, we have that $\theta$ is a unifier of $s$ and $t$. Since $p$ and $\CC$ are computable, the unifier $\sigma$ only contains computable symbols in its range, hence $p\sigma$ is also computable, satisfying condition 2 of $\EqRes$. Therefore, $\ansC{C}{p}\sigma\in S$ and thus $\CD\in\CGr(S)$. Due to $\{\{s',s',s',s'\}\}\uplus C'\succbag C'$, we also have $\CC \succ \CD$. Also, $R_{\CGr(S)}\nmodels C'$, so $R_{\CGr(S)}\nmodels \CD$. Thus, we have found a smaller false c-clause in $\CGr(S)$, contradiction.

\begin{proofsubcase}{3.2}{W.l.o.g. $s'\succ t'$\footnote{Note that $\succ$ is total on ground terms.} and $s'$ is reducible by some $l'\to r'\in R_{\CGr(S)}$.}\end{proofsubcase}

\noindent
Then $s'$ is of the form $s'[l']$ and there is a rule $l'\to r'$ in $R_{\CGr(S)}$ produced by some computable clause $\cC{l'\eqs r'\lor D'}{\KAY}$. Note that we have $\KAY=\emptylist$, as $\cC{l'\eqs r'\lor D'}{\KAY}$ is smaller than all clauses $\cC{l'\eqs r'\lor D'}{\KAY'}$ where $\KAY'$ is non-empty, therefore it must produce the rule $l'\to r'$. Then there are clauses $\ansC{\underline{s[k]\neqs t}\lor C}{p}$ and $\ansC{\underline{l\eqs r}\lor D}{q}$ in $S$ such that $k$ is not a variable (see Case 2), and a ground substitution $\theta$ such that $l\theta=k\theta=l'$, $(s[k]\neqs t\lor C)\theta=s'\neqs t'\lor C'$ and $(l\eqs r\lor D)\theta=l'\eqs r'\lor D'$. Consider the following inference:
\begin{prooftree}
    \AxiomC{$\ansC{\underline{l\eqs r}\lor D}{q}$}
    \AxiomC{$\ansC{\underline{s[k]\neqs t}\lor C}{p}$}
    \LeftLabel{($\Sup_C$)}
    \BinaryInfC{$\ansC{s[r]\neqs t\lor C\lor D}{\iteF(l\eqs r,p,q)}\sigma$}
\end{prooftree}
where $\sigma=\mgu(l,k)$. Since $l'\eqs r'\lor D'$ is productive, \Cref{productive:ordered} from \Cref{def:superposition-model} and the assumption $s'\succ t'$ imply that $r\sigma\nsucceq l\sigma$ and $t\sigma\nsucceq s[k]\sigma$ (condition 3 of the $\Sup_C$ rule). Also, since both c-clauses (and therefore both answer clauses) are computable, $\iteF(l\eqs r,p,q)\sigma$ must be also computable, satisfying condition 4 of the $\Sup_C$ rule. Then, $(s[r]\neqs t\lor C\lor D)\sigma\in S$. Let $\CD$ be the c-clause $\cCs{s'[r']\neqs t'\lor C'\lor D'}$. Note that $\CD\in\CGr(S)$.
Further, we have
$$\{\{s'[l'],s'[l'],t',t'\}\}\uplus C'\succbag \{\{s'[r'],s'[r'],t',t'\}\}\uplus C'\uplus D'$$
and hence $\CC \succ \CD$. Also, $R_{\CGr(S)}\nmodels \CD$: (1) by assumption, $C'$ and $D'$ are false in $R_{\CGr(S)}$, and (2) $s'[l']$ has the same normal form as $t'$ and $s'[r']$, therefore $t'$ and $s'[r']$ have the same normal form too by confluence. Thus, we have found a false c-clause smaller than $\CC$ in $\CGr(S)$, contradiction.

\begin{proofsubcase}{3.3}{W.l.o.g. $s'\succ t'$ and $s'$ is irreducible.}\end{proofsubcase}

\noindent
Then, since $s'\succ t'$ and $s'$ is in normal form, $R_{\CGr(S)}\models s'\neqs t'$ and $\CC$ is satisfied, contradiction.

\begin{proofcase}{4}{$\CC$ is of the form $\cCs{\underline{s'\eqs t'}\lor C'}$ where $s'\eqs t'$ is maximal in $s'\eqs t'\lor C'$.}\end{proofcase}

\noindent
W.l.o.g. $s'\succ t'$.

\begin{proofsubcase}{4.1}{$s'$ is reducible by $l'\to r'\in R_{\CGr(S)}$.}\end{proofsubcase}

\noindent
This is similar to Case 3.2. Then $s'$ is of the form $s'[l']$ and $l'\to r'$ is produced by some c-clause $\cCs{l'\eqs r'\lor D'}$. There are clauses $\ansC{\underline{s[k]\eqs t}\lor C}{p}$ and $\ansC{\underline{l\eqs r}\lor D}{q}$ in $S$, such that $k$ is not a variable (see Case 2), a ground substitution $\theta$ such that $l\theta=k\theta=l'$, $(s[k]\eqs t\lor C)\theta=s'[l']\eqs t'\lor C'$ and $(l\eqs r\lor D)\theta=l'\eqs r'\lor D$. Consider the following inference:
\begin{prooftree}
    \AxiomC{$\ansC{\underline{l\eqs r}\lor D}{q}$}
    \AxiomC{$\ansC{\underline{s[k]\eqs t}\lor C}{p}$}
    \LeftLabel{($\Sup_C$)}
    \BinaryInfC{$\ansC{s[r]\eqs t\lor C\lor D}{\iteF(l\eqs r,p,q)}\sigma$}
\end{prooftree}
where $\sigma=\mgu(l,k)$. Let $\CD$ be the c-clause $\cCs{s'[r']\eqs t'\lor C'\lor D'}$. We check the conditions the $\Sup_C$ rule similarly as in Subcase 3.2. Therefore, $\ansC{s[r]\lor C\lor D}{\iteF(l\eqs r,p,q)}\sigma\in S$ and $\CD\in \CGr(S)$. We also have the following
$$\{\{s'[l'],t'\}\}\uplus C'\succbag \{\{s'[r'],t'\}\}\uplus C'\uplus D'$$
and hence $\CC \succ \CD$. Also, $R_{\CGr(S)}\nmodels \CD$: (1) by assumption, $C'$ and $D'$ are false in $R_{\CGr(S)}$, and (2) $s'[l']$ have distinct normal forms with $t'$, so $s'[r']$ and $t'$ must have distinct normal forms too (otherwise $\CC$ would be true in $R_{\CGr(S)}$ by confluence). Thus, we have found a false clause in $S$ smaller than $\CC$.\medskip

\begin{proofsubcase}{4.2}{$s'$ is irreducible in $R_{\CGr(S)}$.}\end{proofsubcase}

\noindent
Then, $C'$ is of the form $s'\eqs r'\lor D'$ such that $t'\succeq r'$ and $s'\eqs r'$ is true in $R_{\CGr(S)}^{\prec \CC}\cup\{s'\to t'\}$ (i.e. \Cref{productive:false3} of \Cref{def:superposition-model} is violated). Note that this also covers the case when \Cref{productive:maximal} of \Cref{def:superposition-model} is violated, as then $t'=r'$ and $\CC$ is of the form $\cCs{\underline{s'\eqs t'}\lor \underline{s'\eqs t'}\lor D'}$.

Then there is an answer clause $\ansC{\underline{s\eqs t}\lor l\eqs r\lor D}{p}$ in $S$, a ground substitution $\theta$ such that $(s\eqs t\lor l\eqs r\lor D)\theta=s'\eqs t'\lor s'\eqs r'\lor D'$. Consider the following inference:
\begin{prooftree}
    \AxiomC{$\ansC{\underline{s\eqs t}\lor l\eqs r\lor D}{p}$}
    \LeftLabel{($\EqFac$)}
    \UnaryInfC{$\ansC{s\eqs t\lor t\neqs r\lor D}{p}\sigma$}
\end{prooftree}
where $\sigma=\mgu(s,l)$. Let $\CD$ be the c-clause $s'\eqs t'\lor t'\neqs r'\lor D'$. Since $s'\succ t'$ and $t'\succeq r'$ by assumption, condition 2 of the $\EqFac$ rule holds. Also, since $p$ and $\CC$ are computable, the unifier $\sigma$ only contains computable symbols in its range, hence $p\sigma$ is also computable, satisfying condition 3 of $\EqFac$. Then, $\ansC{s\eqs t\lor t\neqs r\lor D}{p}\sigma\in S$ and $\CD\in \CGr(S)$. By $s'\succ t'\succeq r'$, we have the following
$$\{\{s',t'\},\{s',r'\}\}\uplus D'\succbag \{\{s',t'\},\{t',t',r',r'\}\}\uplus D'$$
and hence $\CC \succ \CD$. Also, $R_{\CGr(S)}\nmodels (s'\eqs t'\lor t'\neqs r'\lor D')$: (1) by assumption, $s'\eqs t'$ and $D'$ are false in $R_{\CGr(S)}$, and (2) since $s'\eqs r'$ is true in $R_{\CGr(S)}^{\prec\CC}\cup\{s'\to t'\}$, we have that $R_{\CGr(S)}\models t'\eqs r'$ and therefore $t'\neqs r'$ is false in $R_{\CGr(S)}$. Thus, we have found a false clause in $\CGr(S)$ smaller than $\CC$.\QED

\end{proof}

\begin{lemma}\label{lem:complementaryConditions}
Given a computable ground program term $t$, there is a set of lists of computable literals $\Delta$ that satisfies the following properties:
\begin{enumerate}
    \item\label{lem:complementaryConditions:1} For any answer clause $\AC$ of the form $\ansC{C[x]}{x}$, any uncomputable c-clause in $\iteNF(\AC,t)$ is of the form $\cC{C}{\ELL}$, where $\ELL\in\Delta\cup\{\emptylist\}$.
    \item\label{lem:complementaryConditions:2} For any $\ELL\in\Delta\cup\{\epsilon\}$, there is a ground computable simple term $t_\ELL$ such that for any answer clause $\AC$ of the form $\ansC{C[x]}{x}$, uncomputable $\cC{C'}{\ELL}\in\iteNF(\AC,t)$, and substitution $\theta$ such that $C\theta=C'$, it holds that $x\theta=t_\ELL$ or $x$ is a variable not in $C$.
    \item\label{lem:complementaryConditions:2'} If $\emptylist\notin\Delta$, then for any answer clause $\AC$ of the form $\ansC{C[x]}{x}$, uncomputable $\cCs{C'}\in\iteNF(\AC,t)$, and substitution $\theta$ such that $C\theta=C'$, it holds that $x$ is a variable not in $C$.
    \item\label{lem:complementaryConditions:3} For any $\ELL,\KAY\in\Delta$ such that $\ELL\neq\KAY$, $\ELL$ and $\KAY$ contain a complementary literal.
\end{enumerate}
\end{lemma}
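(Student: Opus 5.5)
We construct $\Delta$ directly from the syntactic structure of $t$, by structural induction on the ground computable program term $t$. For a simple term $t$ we take $\Delta=\emptyset$. For $t=\iteF(L,s_1,s_2)$ we take
\[\Delta(t)=\{\lapp{\hat L}{\ELL}\mid \ELL\in\Delta(s_1)\cup\{\emptylist\}\}\cup\{\lapp{L}{\ELL}\mid \ELL\in\Delta(s_2)\cup\{\emptylist\}\}.\]
This is the set of all branch paths of $t$. Each path is the list of negated or unnegated conditions leading to a leaf. All literals are ground and computable, because $t$ is ground and computable. Along with $\Delta$ we also fix a map sending each path to its leaf. For $\ELL\in\Delta$, we let $t_\ELL$ be the simple term at the end of path $\ELL$. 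For $\emptylist$, we take $t_\emptylist=t$ when $t$ is simple, and an arbitrary ground computable simple term otherwise.

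Properties~\ref{lem:complementaryConditions:1} and~\ref{lem:complementaryConditions:3} follow by a joint induction that unfolds \Cref{def:ite-normal-form}. Let $\AC=\ansC{C[x]}{x}$. If $C$ does not contain $x$, then $\iteNF(\AC,t)=\{\cCs{C}\}$, so its condition is $\emptylist$. Otherwise each c-clause arises by recursing down one branch. Its condition is $\lapp{\hat L}{\ELL}$ with $\cC{D}{\ELL}\in\iteNF(\AC,s_1)$, or $\lapp{L}{\ELL}$ with $\cC{D}{\ELL}\in\iteNF(\AC,s_2)$, and by the induction hypothesis this lies in $\Delta(t)$. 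A subtle point is that the recursion on $s_i$ may bottom out at the base case $\{\cCs{C[s_i]}\}$ when $s_i$ is simple. This is exactly why $\emptylist$ is included when building $\Delta(t)$.

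For property~\ref{lem:complementaryConditions:3}, take two distinct lists in $\Delta(t)$. If their heads are $\hat L$ and $L$, they are complementary. If both lie in the same half, their tails are distinct elements of $\Delta(s_i)\cup\{\emptylist\}$. When both tails are in $\Delta(s_i)$, the induction hypothesis applies. The tails cannot be $\emptylist$ together with a nonempty list, because $\emptylist\in$ the tail set only when $s_i$ is simple, in which case $\Delta(s_i)=\emptyset$.

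For properties~\ref{lem:complementaryConditions:2} and~\ref{lem:complementaryConditions:2'}, the key observation is that each c-clause $\cC{C'}{\ELL}\in\iteNF(\AC,t)$ with $C$ containing $x$ has the form $C' = C[t_\ELL]$. Hence if $C\theta=C'$ and $x$ occurs in $C$, the occurrences of $x$ are mapped to $t_\ELL$, giving $x\theta=t_\ELL$. We must make sure that matching $C[x]\theta=C[t_\ELL]$ really forces $x\theta=t_\ELL$. This holds because $C$ is matched against an instance of itself at the same positions, and we reason positionally at an occurrence of $x$. If $x$ does not occur in $C$, the c-clause is $\cCs{C}$ and the second disjunct holds. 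For property~\ref{lem:complementaryConditions:2'}, note that $\emptylist\notin\Delta$ always holds by construction, since every element has a head. So it remains to show that if $t$ is an $\iteF$, any c-clause with empty condition comes from the base case where $x\notin C$. This holds because every recursive case prepends a literal. If $t$ is simple, then $\Delta=\emptyset$, and the clauses $\cCs{C[t]}$ are covered by the choice $t_\emptylist=t$ in property~\ref{lem:complementaryConditions:2}. However, property~\ref{lem:complementaryConditions:2'} would then demand that $x\notin C$, which fails.

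The main obstacle is this boundary case of a simple $t$. I would resolve it by defining $\Delta=\{\emptylist\}$ when $t$ is simple. Then property~\ref{lem:complementaryConditions:2'} becomes vacuous and property~\ref{lem:complementaryConditions:3} is trivial. For an $\iteF$ term $t$, I would use $\Delta(s_i)$ with the convention that a simple subterm contributes the tail $\emptylist$. I would recheck the bookkeeping of the induction under this convention.
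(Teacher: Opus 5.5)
Your final construction is exactly the paper's: $\Delta=\{\emptylist\}$ for simple $t$, $\Delta(\iteF(L,s_1,s_2))=\{\lapp{\hat L}{\ELL}\mid \ELL\in\Delta(s_1)\}\cup\{\lapp{L}{\ELL}\mid \ELL\in\Delta(s_2)\}$, and $t_\ELL$ the leaf reached along $\ELL$. Your structural induction is essentially the paper's proof, and you spell out more carefully than the paper does the positional argument that $C\theta=C[t_\ELL]$ forces $x\theta=t_\ELL$ when $x$ occurs in $C$. Do discard your first displayed formula: using $\Delta(s_i)\cup\{\emptylist\}$ as the tail set when $s_i$ is an $\iteF$ term puts the non-leaf prefix $\lapp{\hat L}{\emptylist}$ into $\Delta$, which in general shares no complementary literal with, e.g., $\lapp{\hat L}{\lapp{\hat{L'}}{\emptylist}}$, so the complementary-literal property would fail.
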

\begin{proof}
By induction on the structure of $t$.

\begin{proofcase}{1}{$t$ is of the form $\iteF(L,s,s')$.}\end{proofcase}

\noindent
By the induction hypothesis, there are clause sets $S$ and $S'$ for program terms $s$ and $s'$, respectively, that satisfy the four properties. We now prove that the following computable ground clause set $\Delta$ satisfies the properties for $t$:
$$\{\lapp{\hat{L}}{\ELL}\mid \ELL\in S\}\cup\{\lapp{L}{\ELL}\mid \ELL\in S'\}$$
\noindent
Towards \Cref{lem:complementaryConditions:1}, let $\AC$ be an answer clause of the form $\ansC{C[x]}{x}$. Suppose that $C$ does not contain $x$. Then, $\iteNF(\AC,t)$ is $\{\cCs{C}\}$ and the property is satisfied.

Otherwise, we have by induction hypothesis that any clause in $\iteNF(\AC,s)$ (resp. $\iteNF(\AC,s')$) is of the form $\cC{C}{\ELL}$ where $\ELL\in S\cup\{\emptylist\}$ (resp. $\ELL\in S'\cup\{\emptylist\}$). By construction, any clause in $\iteF(\AC,t)$ is either of the form $\cC{D}{\lapp{\hat{L}}{\ELL}}$ where $\cC{D}{\ELL}\in\iteNF(\AC,s)$ or $\cC{D}{\lapp{L}{\KAY}}$ where $\cC{D}{\KAY}\in\iteNF(\AC,s')$. Both $\lapp{\hat{L}}{\ELL}$ and $\lapp{L}{\KAY}$ are in $\Delta$, proving the first property.\medskip

\noindent
Towards \Cref{lem:complementaryConditions:2}, let $\AC$ be an answer clause of the form $\ansC{C[x]}{x}$ and $\theta$ a substitution. Suppose that $C$ does not contain $x$. Then, $\iteNF(\AC,t)$ is the set $\{\cCs{C}\}$ and the property trivially holds due to $x$ being a variable not in $C$.

Otherwise, by construction, any clause in $\iteNF(\AC,t)$ is one of the following forms:
\begin{enumerate}
    \item $\cC{D}{\lapp{\hat{L}}{\ELL}}$ where $\cC{D}{\ELL}\in\iteNF(\AC,s)$: by induction hypothesis there is a ground computable term $t_\ELL$ for $\cC{D}{\ELL}$ such that $x\theta=t_\ELL$ or $x$ is a variable not in $C$ if $C[x]\theta=D$. It follows trivially for $\cC{D}{\lapp{\hat{L}}{\ELL}}$, too and we set $t_{\hat{L}\lor\ELL} = t_\ELL$.
    \item $\cC{D}{\lapp{L}{\ELL}}$ where $\cC{D}{\ELL}\in\iteNF(\AC,s')$: by induction hypothesis there is a ground computable term $t_\ELL$ for $\cC{D}{\ELL}$ such that $x\theta=t_\ELL$ or $x$ is a variable not in $C$ if $C[x]\theta=D$. It follows trivially for $\cC{D}{\lapp{L}{\ELL}}$, too and we set $t_{L\lor\ELL} = t_\ELL$.
\end{enumerate}

\noindent
Towards \Cref{lem:complementaryConditions:2'}, let $\AC$ be an answer clause of the form $\ansC{C[x]}{x}$ and $\theta$ a substitution. Suppose that $C$ does not contain $x$. Then, $\iteNF(\AC,t)$ is the set $\{\cCs{C}\}$ and the property trivially holds due to $x$ being a variable not in $C$.

Otherwise, by construction, any clause in $\iteNF(\AC,t)$ is either of the form $\cC{D}{\lapp{\hat{L}}{\ELL}}$ where $\cC{D}{\ELL}\in\iteNF(\AC,s)$ or of the form $\cC{D}{\lapp{L}{\ELL}}$ where $\cC{D}{\ELL}\in\iteNF(\AC,s')$. These clauses are not of the form $\cCs{C}$ so the property trivially holds for them.\medskip

\noindent
Towards \Cref{lem:complementaryConditions:3}, any pair of clauses in $\iteNF(\AC,s)$, respectively $\iteNF(\AC,s')$ with different conditions have complementary literals by assumption, and this is still true after adding $\hat{L}$, respectively $L$ to these clauses. All other pairs (between the two clause sets) have the complementary $L$ and $\hat{L}$ in their conditions.

\begin{proofcase}{2}{Otherwise.}\end{proofcase}

\noindent
The set $\Delta$ is simply $\{\emptylist\}$. Let $\ansC{C[x]}{x}$ be an answer clause and $\theta$ a substitution. We have that $\iteNF(\ansC{C[x]}{x},t)=\{\cCs{C[t]}\}$. By \Cref{def:ite-normal-form}, thus \Cref{lem:complementaryConditions:1} is satisfied. For \Cref{lem:complementaryConditions:2}, either $\theta=\{x\mapsto t\}$ and the term in question $t_{\emptylist}$ has to be $t$ or $C[x]\theta\neq C[t]$. \Cref{lem:complementaryConditions:2'} is trivially satisfied as $\emptylist\in\Delta$. \Cref{lem:complementaryConditions:3} is also trivially satisfied, since there are no two different elements in the set $\Delta$.
\QED
\end{proof}

\uncomputableConditions*
\begin{proof}
Let $\Delta$ be the set of ground literals obtained by applying \Cref{lem:complementaryConditions} for the term $t$.

\noindent
Towards \Cref{lemma:uncomputable-conditions:1}, let $\CC\in\Gr(S,t)$. We prove the property by induction on the derivation length of $\CC$.

\begin{proofcase}{1}{$\CC$ has a derivation length 0.}\end{proofcase}

\noindent
By \Cref{def:uncomp-grounding:base} of \Cref{def:uncomp-grounding}, $\CC$ is in $\iteNF(\AC,t)$ for some $\AC$ of the form $\ansC{C[x]}{x}$ and the claim holds due to \Cref{lem:complementaryConditions}.

\begin{proofcase}{2}{$\CC$ has non-zero derivation length.}\end{proofcase}

\noindent
By \Cref{def:uncomp-grounding:step} of \Cref{def:uncomp-grounding}, there are clauses $\cC{C_1}{\KAY_1},\ldots,\cC{C_n}{\KAY_n}$ from which $\CC$ is derived, all with smaller derivation length, so we have that \Cref{lemma:uncomputable-conditions:1} holds for all these clauses, i.e. for all $1\le i\le n$, we have $\KAY_i$ is empty or $\KAY_i\in\Delta$. From \Cref{def:uncomp-grounding:step} of \Cref{def:uncomp-grounding}, either $\KAY_i$ is empty for all $1\le i\le n$, and we have that $\CC$ is of the form $\cCs{C}$, or there is some $1\le i\le m$ such that $\CC$ is of the form $\cC{C}{\ELL}$ for $\ELL=\KAY_i$. In both cases we have $\ELL\in\Delta\cup\{\emptylist\}$. This proves the claim.\medskip

\noindent
Towards \Cref{lemma:uncomputable-conditions:2}, let $\ELL\in\Delta\cup\{\emptylist\}$ and $t_\ELL$ be the ground computable term obtained from \Cref{lem:complementaryConditions:2} of \Cref{lem:complementaryConditions}. We show that this $t_\ELL$ satisfies the property for $\ELL$. Let $\CC$ be any clause from $\Gr(S,t)$ of the form $\cC{C'}{\ELL}$. The proof is again by induction on the derivation length of $\CC$.

\begin{proofcase}{1}{$\CC$ has a derivation length 0.}\end{proofcase}

\noindent
By \Cref{def:uncomp-grounding:base} of \Cref{def:uncomp-grounding}, $\CC$ is in $\iteNF(\AC,t)$, and the claim holds for all answer clauses $\ansC{C[x]}{x}\in S$ and substitutions $\theta$ due to \Cref{lem:complementaryConditions}.

\begin{proofcase}{2}{$\CC$ has non-zero derivation length.}\end{proofcase}

\noindent
By \Cref{def:uncomp-grounding:step} of \Cref{def:uncomp-grounding}, there is an inference $\ansC{C_1}{p_1},\ldots,\ansC{C_n}{p_n}\vdash \ansC{C}{p}\sigma$, clauses $\cC{C_1'}{\KAY_1}$, $\ldots$, $\cC{C_n'}{\KAY_n}$, and a substitution $\theta$ such that $\sigma\theta=\theta$, $C\theta=\CC$, and by assumption for all $1\le i\le n$, we have $\KAY_i\in\Delta\cup\{\emptylist\}$, and by the induction hypothesis, the property holds for clause $\cC{C_i'}{\KAY_i}$, answer clause $\ansC{C_i}{p_i}$ and $\theta$.

For all $\Sup_U$, $\EqRes$ and $\EqFac$ rule applications, there is an idempotent most general unifier $\sigma$ such that $C_i\sigma\theta=C_i'$ and $p_i\sigma=p$ for all $1\le i\le n$.

Suppose that for all $1\le i\le n$, we have that $p_i$ is a variable not in $C_i$. Then, it follows that $p$ is also a variable not in $C$. Otherwise, there is at least one $1\le i \le n$ such that $p_i\sigma\theta=t_\ELL$. Note that for all $1\leq j\leq n$, if $p_j$ is not a variable not in $C_j$, also $p_j\sigma\theta=t_\ELL$ and $p_j$ is a simple term. We get that $p_i\sigma\theta=p\theta=t_\ELL$ and $p_j\sigma$ is a simple term too, proving the claim.

\medskip

\noindent
Towards \Cref{lemma:uncomputable-conditions:2'}, let $\CC$ be any clause from $\Gr(S,t)$ of the form $\cCs{C'}$. The proof is again by induction on the derivation length of $\CC$.

\begin{proofcase}{1}{$\CC$ has a derivation length 0.}\end{proofcase}

\noindent
By \Cref{def:uncomp-grounding:base} of \Cref{def:uncomp-grounding}, $\CC$ is in $\iteNF(\AC,t)$, and the claim holds for all answer clauses $\ansC{C[x]}{x}\in S$ and substitutions $\theta$ due to \Cref{lem:complementaryConditions}.

\begin{proofcase}{2}{$\CC$ has non-zero derivation length.}\end{proofcase}

\noindent
By \Cref{def:uncomp-grounding:step} of \Cref{def:uncomp-grounding}, and by the induction hypothesis, there is an inference $\ansC{C_1}{x_1},\ldots,\ansC{C_n}{x_n}\vdash \ansC{C}{p}\sigma$, clauses $\cCs{C_1'}$, $\ldots$, $\cCs{C_n'}$, and a substitution $\theta$ such that $\sigma\theta=\theta$, $C\theta=\CC$, and $x_i$ is a variable not in $C_i$ for all $1\le i\le n$. Again, by inspecting the unifier applied in all $\Sup_U$, $\EqRes$ and $\EqFac$ rule applications, we get that $x_i\sigma=p$ for all $1\le i\le n$ and therefore $p$ must be also a variable not in $C$.

\medskip

\noindent
Finally, the clause set $\Delta$ satisfies \Cref{lemma:uncomputable-conditions:3} by \Cref{lem:complementaryConditions}.\QED

\end{proof}

\groundingSatisfied*

\begin{proof}
We show that for any c-clause $\CC\in \Gr(S,t)$, it holds that $R_{\Gr(S,t)} \models \CC$. Since $\bigcup_{\AC\in S_0} \iteNF(\AC, t)$ is equisatisfiable with $\lnot F[\bar{\sigma},t[\bar{\sigma}]]$, it then follows that $\lnot F[\bar{\sigma},t[\bar{\sigma}]]$ is satisfiable too.\footnote{The initial set $S_0$ is $\ansC{\cnf(\neg F[\bar{\alpha}, y])}{y}$.}

The proof is by induction on $\succ$ on c-clauses. By contradiction, suppose that there is a c-clause $\CC\in \Gr(S,t)$ such that $R_{\Gr(S,t)}\nmodels \CC$. Since $\succ$ is well-founded, there is a minimal such $\CC$. Since all c-clauses in $\Gr(S,t)\setminus \CGr(S)$ are uncomputable and greater w.r.t. $\succ$ than any (necessarily computable) c-clause in $\CGr(S)$, by~\Cref{lem:Rproperties} and~\Cref{lemma:computable-clauses-completeness} we have that $\CC$ is satisfied if $\CC\in\CGr(S)$. Hence, $\CC$ is uncomputable. We distinguish the following cases.


\begin{proofcase}{1}{$\CC$ is of the form $\cC{C'}{\ELL}$ and there is an answer clause $\ansC{C}{p}$ in $S$ and a substitution $\theta$ s.t. $C\theta=C'$, and there is a variable $x$ in $C$ such that $x\theta$ is reducible in $R_{\Gr(S,t)}$.}\end{proofcase}

\noindent
Note that $p$ cannot contain $x$ by~\Cref{def:uncomp-grounding}. Let $\theta'$ be the substitution $\theta$, except that it maps $x$ to the normal form of $x\theta$ w.r.t. $R_{\Gr(S,t)}$. Let $\CD$ be the c-clause $\cC{C'\theta'}{\ELL}$. Then, since $x\theta\succ x\theta'$ and $C$ contains $x$, we have $\CC\succ \CD$. Also, $x\theta'$ is the normal form of $x\theta$, so $R_{\Gr(S,t)}\nmodels \CD$ too. Note that $\CD$ is in $\Gr(S,t)$ by \Cref{def:uncomp-grounding}. Thus, we have found a c-clause in $\Gr(S,t)$ smaller than $\CC$.

\begin{proofcase}{2}{$\CC$ is of the form $\cC{\underline{s'\neqs t'}\lor C'}{\ELL}$ such that $s'\neqs t'$ is maximal in $\CC$.}\end{proofcase}

\begin{proofsubcase}{2.1}{$s'=t'$.}\end{proofsubcase}

\noindent
There is a clause $\ansC{\underline{s\neqs t}\lor C}{p}$ in $S$, a ground substitution $\theta$ such that $(s\neqs t\lor C)\theta=s'\neqs t'\lor C'$. Consider the following inference:
\begin{prooftree}
    \AxiomC{$\ansC{\underline{s\neqs t}\lor C}{p}$}
    \LeftLabel{($\EqRes$)}
    \UnaryInfC{$(\ansC{C}{p})\sigma$}
\end{prooftree}
where $\sigma=\mgu(s,t)$. Let $\CD$ be $\cC{C'}{\ELL}$. By $s\theta=t\theta=s'$, we have that $\theta$ is a unifier of $s$ and $t$. By \Cref{lemma:uncomputable-conditions}, we get that $p\theta$ is computable or $p$ is a variable and not in $s\neqs t\lor C$, and therefore $p\sigma$ is also computable. Therefore, condition 2 of the $\EqRes$ rule is satisfied and $\ansC{C}{p}\sigma\in S$ and $\CD\in \Gr(S,t)$. Due to $\{\{s',s',s',s'\}\}\uplus C'\succbag C'$, we also have $\CC \succ \CD$. Also, $R_{\Gr(S,t)}\nmodels C'$, so $R_{\Gr(S,t)} \nmodels \CD$. Thus, we have found a false clause smaller than $\CC$, contradiction.

\begin{proofsubcase}{2.2}{W.l.o.g. $s'\succ t'$ and $s'$ is reducible by some $l'\to r'\in R_{\Gr(S,t)}$ where $l'$ is not computable.}\end{proofsubcase}

\noindent
Then $s'$ is of the form $s'[l']$ and there is a rule $l'\to r'$ in $R_{\Gr(S,t)}$ produced by some uncomputable clause $\cC{l'\eqs r'\lor D'}{\KAY}$ in $\Gr(S,t)$. Let $\Delta$ be the set from \Cref{lemma:uncomputable-conditions}. By \Cref{lemma:uncomputable-conditions:1} of \Cref{lemma:uncomputable-conditions}, we have that $\ELL,\KAY\in\Delta\cup\{\emptylist\}$. If they are in $\Delta$ and different, then by \Cref{lemma:uncomputable-conditions:3} of \Cref{lemma:uncomputable-conditions}, we have that $\ELL$ and $\KAY$ contain complementary conditions, that is, there is a literal $L$ such that $L\in\ELL$ and $\hat{L}\in\KAY$. By \Cref{productive:false2} of \Cref{def:superposition-model}, $R_{\Gr(S,t)}\nmodels\hat{L}$, and thus $R_{\Gr(S,t)}\models L$, so $\CC$ cannot be false, contradiction.

Otherwise $\ansC{\underline{s[k]\neqs t}\lor C}{p}$ and $\ansC{\underline{l\eqs r}\lor D}{q}$ in $S$ such that $k$ is not a variable (see Case 1), and there is a ground substitution $\theta$ such that $l\theta=k\theta=l'$, $(s[k]\neqs t\lor C)\theta=s'[l']\neqs t'\lor C'$ and $(l\eqs r\lor D)\theta=l'\eqs r'\lor D'$. If $\ELL,\KAY\in \Delta$, then they must be the same, and by \Cref{lemma:uncomputable-conditions:2} of \Cref{lemma:uncomputable-conditions}, we get that there is a term $t_\ELL$ ($=t_\KAY$) such that (i) $q\theta=t_\ELL$ or $q$ is a variable not in $l\eqs r\lor D$ and (ii) $p\theta=t_\ELL$ or $p$ is a variable not in $s[k]\neqs t\lor C$. Otherwise $\ELL$ or $\KAY$ is empty, and then by \Cref{lemma:uncomputable-conditions:2'} of \Cref{lemma:uncomputable-conditions}, we get that $p$ (resp. $q$) is a variable that is not in $s[k]\neqs t\lor C$ (resp. $l\eqs r\lor D$). In all of these cases, we get that the unifier $\sigma=\mgu((l,p),(k,q))$ exists. Consider the following inference:
\begin{prooftree}
    \AxiomC{$\ansC{\underline{l\eqs r}\lor D}{q}$}
    \AxiomC{$\ansC{\underline{s[k]\neqs t}\lor C}{p}$}
    \LeftLabel{($\Sup_U$)}
    \BinaryInfC{$\ansC{s[r]\neqs t\lor C\lor D}{p}\sigma$}
\end{prooftree}
Since $l'\eqs r'\lor D'$ is productive, \Cref{productive:ordered} from \Cref{def:superposition-model} and the assumption $s'\succ t'$ imply that $r\sigma\nsucceq l\sigma$ and $t\sigma\nsucceq s[k]\sigma$ (condition 3 of the $\Sup_U$ rule). By the above reasoning, we also have that $p\theta$ is either $t_\ELL$, $t_\KAY$ or a variable, therefore $p\sigma$ is computable and condition 4 of the $\Sup_U$ rule is satisfied. Then we have that $\ansC{s[r]\neqs t\lor C\lor D}{p}\sigma\in S$. Let $\CD$ be the c-clause $\cC{s'[r']\neqs t'\lor C'\lor D'}{\ELL'}$ where $\ELL'$ is $\ELL$ if $\KAY$ is empty, otherwise $\KAY$. By \Cref{def:uncomp-grounding:step} of \Cref{def:uncomp-grounding}, we have $\CD\in \Gr(S,t)$. Similarly to Subcase 3.2 of \Cref{lemma:computable-clauses-completeness}, we get that $\CC\succ\CD$ and $s'[r']\neqs t'\lor C'\lor D'$ is false. Note that $\ELL'$ is either empty or false because $\ELL$ is false (by \Cref{productive:false2} of \Cref{def:superposition-model}). Thus, $\CD$ is a false clause in $\Gr(S,t)$ smaller than $\CC$, contradiction. 

\begin{proofsubcase}{2.3}{W.l.o.g. $s'\succ t'$ and $s'$ is reducible by some $l'\to r'\in R_{\Gr(S,t)}$ where $l'$ is computable.}\end{proofsubcase}

\noindent
Then $s'$ is of the form $s'[l']$ and there is a rule $l'\to r'$ in $R_{\Gr(S,t)}$ produced by some computable clause $\cCs{l'\eqs r'\lor C'}$. Suppose that $s'$ is computable. Then, since $s'\succ t'$, $t'$ must also be computable. But then, since $\CC$ is uncomputable, there must be an uncomputable literal in $C'$ and this literal is strictly greater than $s'\neqs t'$ w.r.t. $\succ$, which contradicts the assumption that $s'\neqs t'$ is maximal. Therefore, $s'$ must be uncomputable. This also means that $l'$ is a strict subterm of $s'$, otherwise $s'$ would be computable.

There is $\ansC{\underline{s[l]\neqs t}\lor C}{p}$ in $S$ such that $l$ is not a variable, a ground substitution $\theta$ such that $l\theta=l'$ and $(s[l]\neqs t\lor C)\theta=s'[l']\neqs t'\lor C'$. Consider the following inference where the variable $x$ is fresh and $k$ is some superterm of $l$ (possibly $l$ itself):
\begin{prooftree}
\AxiomC{$\ansC{s[k]\neqs t\lor C}{p}$}
\LeftLabel{($\Abs$)}
\UnaryInfC{$\ansC{s[x]\neqs t\lor x\neqs k\lor C}{p}$}
\end{prooftree}
We have $s'\succ t'$, so $s\npreceq t$ satisfying condition 1 of the $\Abs$ rule. Condition 2 of the $\Abs$ rule is satisfied for $l$ and $\theta$. W.l.o.g. we can assume that $k$ is chosen such that condition 2 is still satisfied, and such that no superterm of $k$ satisfies condition 2, satisfying condition 3 of the rule. But this means that the $\Abs$ rule was not applied to this clause, and thus $S$ is not abstracted, contradiction.

\begin{proofsubcase}{2.4}{W.l.o.g. $s'\succ t'$ and $s'$ is irreducible.}\end{proofsubcase}

\noindent
Then, since $s'\succ t'$ and $s'$ is in normal form, $R_{\Gr(S,t)}\models s'\neqs t'$ and $\CC$ is satisfied, contradiction.

\begin{proofcase}{3}{$\CC$ is of the form $\cC{\underline{s'\eqs t'}\lor C'}{\ELL}$ where $s'\eqs t'$ is maximal in $s'\eqs t'\lor C'$.}\end{proofcase}

\noindent
W.l.o.g. $s'\succ t'$.

\begin{proofsubcase}{3.1}{$s'$ is reducible by some $l'\to r'\in R_{\Gr(S,t)}$ where $l'$ is uncomputable.}\end{proofsubcase}

\noindent
This is similar to Subcase 2.2. Then $s'$ is of the form $s'[l']$ and $l'\to r'$ is produced by some c-clause $\cC{l'\eqs r'\lor D'}{\KAY}$. There are clauses $\ansC{\underline{s[k]\eqs t}\lor C}{p}$ and $\ansC{\underline{l\eqs r}\lor D}{q}$ in $S$, such that $k$ is not a variable, a ground substitution $\theta$ such that $l\theta=k\theta=l'$, $(s[k]\eqs t\lor C)\theta=s'[l']\eqs t'\lor C'$ and $(l\eqs r\lor D)\theta=l'\eqs r'\lor D$. Similarly to Subcase 2.2, we assert that either $R_{\Gr(S,t)}\models \CC$ reaching a contradiction, or that the unifier $\sigma=\mgu((l,p),(k,q))$ exists. Consider the following inference:
\begin{prooftree}
    \AxiomC{$\ansC{\underline{l\eqs r}\lor D}{q}$}
    \AxiomC{$\ansC{\underline{s[k]\eqs t}\lor C}{p}$}
    \LeftLabel{($\Sup_U$)}
    \BinaryInfC{$\ansC{s[r]\lor C\lor D}{p}\sigma$}
\end{prooftree}
Let $\CD$ be the c-clause $\cC{s'[r']\eqs t'\lor C'\lor D'}{\ELL'}$ where $\ELL'$ is $\ELL$ if $\KAY$ is empty, otherwise $\KAY$. We check the conditions the $\Sup_U$ rule similarly as in Subcase 2.2. Therefore, $\ansC{s[r]\lor C\lor D}{p}\sigma\in S$ and $\CD\in \Gr(S,t)$. We also have the following
$$\{\{s'[l'],t'\}\}\uplus C'\succbag \{\{s'[r'],t'\}\}\uplus C'\uplus D'$$
and hence $\CC \succ \CD$. Also, $R_{\Gr(S,t)}\nmodels \CD$: (1) by assumption, $C'$ and $D'$ are false in $R_{\Gr(S,t)}$, and (2) $s'[l']$ have distinct normal forms with $t'$, so $s'[r']$ and $t'$ must have distinct normal forms too (otherwise $\CC$ would be true in $R_{\Gr(S,t)}$ by confluence), and (3) $\ELL'$ is either empty or false because $\ELL$ is empty or false by \Cref{productive:false2} of \Cref{def:superposition-model}. Thus, we have found a false clause in $S$ smaller than $\CC$.

\begin{proofsubcase}{3.2}{W.l.o.g. $s'\succ t'$ and $s'$ is reducible by some $l'\to r'\in R_{\Gr(S,t)}$ where $l'$ is computable.}\end{proofsubcase}

\noindent
Analogous to Subcase 2.3.

\begin{proofsubcase}{3.3}{$s'$ is irreducible in $R_{\Gr(S,t)}$.}\end{proofsubcase}

\noindent
Then, $C'$ is of the form $s'\eqs r'\lor D'$ such that $t'\succeq r'$ and $s'\eqs r'$ is true in $R_{\CGr(S)}^{\prec \CC}\cup\{s'\to t'\}$ (i.e. \Cref{productive:false3} of \Cref{def:superposition-model} is violated). Note that this also covers the case when \Cref{productive:maximal} of \Cref{def:superposition-model} is violated, as then $t'=r'$ and $\CC$ is of the form $\cC{\underline{s'\eqs t'}\lor \underline{s'\eqs t'}\lor D'}{\ELL}$.

Then there is an answer clause $\ansC{\underline{s\eqs t}\lor l\eqs r\lor D}{p}$ in $S$, a ground substitution $\theta$ such that $(s\eqs t\lor l\eqs r\lor D)\theta=s'\eqs t'\lor s'\eqs r'\lor D'$. Consider the following inference:
\begin{prooftree}
    \AxiomC{$\ansC{\underline{s\eqs t}\lor l\eqs r\lor D}{p}$}
    \LeftLabel{($\EqFac$)}
    \UnaryInfC{$\ansC{s\eqs t\lor t\neqs r\lor D}{p}\sigma$}
\end{prooftree}
where $\sigma=\mgu(s,l)$. Let $\CD$ be the c-clause $s'\eqs t'\lor t'\neqs r'\lor D'$. Since $s'\succ t'$ and $t'\succeq r'$ by assumption, condition 2 of the $\EqFac$ rule holds. By \Cref{lemma:uncomputable-conditions}, we get that $p\theta$ is computable or $p$ is a variable and not in $s\neqs t\lor C$, and therefore $p\sigma$ is also computable, satisfying condition 3 of $\EqFac$. Then, $\ansC{s\eqs t\lor t\neqs r\lor D}{p}\sigma\in S$ and $\CD\in \Gr(S,t)$. Similarly to Subcase 4.2 of \Cref{lemma:computable-clauses-completeness}, we get that $\CC \succ \CD$ and $R_{\Gr(S,t)}\nmodels (s'\eqs t'\lor t'\neqs r'\lor D')$. We also have that $R_{\Gr(S,t)}\nmodels \ELL$, hence $R_{\Gr(S,t)}\nmodels\CD$. Thus, we have found a false clause in $\Gr(S,t)$ smaller than $\CC$.\QED

\end{proof}

}
\end{document}